\theoremstyle{plain}
\newtheorem{corollary}{Corollary}
\newtheorem{definition}{Definition}
\newtheorem{lemma}{Lemma}
\numberwithin{equation}{section}
\newcommand{\eone}{\mathbf{e_1}}
\newcommand{\etwo}{\mathbf{e_2}}
\newcommand{\ethree}{\mathbf{e_3}}
\DeclareMathOperator{\sech}{sech}
\begin{document}

\title{Classification of Killing Magnetic Curves In $\mathbb H^3$}

\author[1]{Özgür Kelekçi\footnote{Corresponding author}}
\author[2]{Furkan Semih Dündar}
\author[3]{Gülhan Ayar}
\affil[1]{{\small Department of Basic Sciences \& Faculty of Engineering,  \hspace{125mm} University of Turkish Aeronautical Association,  Ankara,  Turkey \hspace{125mm} okelekci@thk.edu.tr}}

\affil[2]{{\small Department of Mechanical Engineering, Amasya University \hspace{225mm}  Amasya 05100, Turkey \hspace{225mm}  furkan.dundar@amasya.edu.tr}}

\affil[3]{{\small Department of Mathematics, Karamanoğlu Mehmetbey University \hspace{225mm}  Karaman 70100, Turkey \hspace{225mm}    gulhanayar@kmu.edu.tr}}

\date{}

\maketitle

\begin{abstract}
In this paper, we study classification of magnetic curves corresponding to Killing vector fields of $\mathbb H^3$. First, we solve the geodesic equation analytically. Then we calculate the trajectories generated by all the six Killing vector fields, which are considered as magnetic field vectors, by using perturbation method up to first order with respect to the strength of the magnetic field. We present a comparison of our solution with the numerical solution for one case. We also prove that 3-dimensional $(\alpha)$-Kenmotsu manifolds cannot have any magnetic vector field in the direction of their Reeb vector fields.
\end{abstract}

 \noindent \textbf{\textit{Keywords}}:Almost contact manifolds; Killing magnetic curves; Hyperbolic spaces.

\section{Introduction}

One of the key areas of research in differential geometry and physics is the
study of magnetic fields and the magnetic curves that correspond to them on
various manifolds. Charged particles travelling along a magnetic field
produce  magnetic curves on Riemannian manifolds.Various magnetic
fields were also extended to various ambient spaces~ \cite{comtet1987landau,sunada1993magnetic,adachi1994kahler,cabrerizo2009contact,dructua2013killing,dructua2011magnetic}. Killing magnetic
trajectories have been derived on some $3-$dimensional warped product  spaces~\cite{iqbal2020magnetic}.

A closed $2-$form defines a magnetic field on a Riemannian
manifold. This definition is derived from the fact that static magnetic fields
on a Euclidean $3-$space can be thought of as a generalization of a closed $
2-$form on a Riemannian manifold, see e.g. \cite{comtet1987landau,sunada1993magnetic}. A magnetic curve is the
path taken by a charged particle on which a magnetic field exerts a force. It is the result of solving the Lorentz
equation, a second order differential equation related to the magnetic
field. Geodesics under arclength parameterization equation is generalized by
the Lorentz equation.

A geometrical method for studying magnetic fields in
three-dimensional Riemannian manifolds has been developed where the relation between the vector fields and 2-forms was utilized \cite{Cabrerizo2}. Divergence-free vector fields define magnetic fields on three-dimensional Riemannian manifolds.

There has been a growing interest for research of magnetic curves on various geometric
structures in the last two decades. Here we highlight the most relevant works for our study which is not an exhaustive list of all magnetic curve studies. Druta and Munteanu have investigated Killing magnetic curves in Minkowski 3-space~\cite{dructua2013killing}. Cabrerizo et al. have studied the contact magnetic flow in 3D
Sasakian manifolds~\cite{cabrerizo2009contact}. Inoguchi and Munteanu have investigated
contact magnetic curves in the real special linear group of degree 2 (S$\text{L}_2\mathbb R$)~\cite{Inoguchi3}.
Jiang and Sun have studied the local differential geometrical
properties of the lightlike Killing magnetic curves in de Sitter 3-space~\cite{Jiang-Sun}. Munteanu and Nistor have provided the classification of Killing magnetic curves in $ \mathbb S^2 \times \mathbb R $~\cite{munteanu}. Ayd\i n
have classified the magnetic curves with constant curvature in a Galilean
3-space~\cite{aydin}. Magnetic curves with respect to the canonical contact
structure of the space $\text{Sol}_3$ have been investigated by Erjavec~\cite{zlatko} and
Altunba\c{s} have obtained some explicit formulas for Killing magnetic
curves in non-flat Lorentzian-Heisenberg spaces~\cite{altunbas}. Inoguchi studied some special curves in 3-dimensional hyperbolic geometry and solvgeometry~\cite{Inoguchi2}. Magnetic curves and linking numbers in the 3-sphere  ($\mathbb S^3$) and hyperbolic 3-space ($\mathbb H^3$) have been studied by De Turck and Gluck \cite{DeTurck1,DeTurck2}. However, their studies mostly deal with topological aspects of the subject and they do not provide explicit solutions for Killing magnetic curves.

The organization of the paper is as follows. In Section~\ref{sec:prelim} we give fundamental definitions that we will use in the subsequent parts of the paper. In Section~\ref{gen-struct-H3} we highlight basic properties of the $\mathbb H^3$ manifold. In Section~\ref{sec:geodesic_eom} we derive and solve the geodesic equation. In Section~\ref{sec:K_magnetics} we list six Killing vectors of the $\mathbb H^3$ manifold, which we multiply by constants $B_i$ where $i = 1,2,3,4,5,6$ in order to control the strength of the magnetic field. We have given analytical solutions for each case up to first order in $B_i$. Finally in Section~\ref{sec:conlusion} we conclude the paper. We use units where the mass ($m$) of the particle, its electrical charge ($q$) is related by $q/m = -1$.

\section{Preliminaries}\label{sec:prelim}

Let $(M,\phi ,\xi ,\eta ,g)$ be an $n$-dimensional differentiable
manifold, $(n=2m+1)$. $M$ is called an almost contact Riemannian manifold, where $\phi 
$ is a $(1,1)-$tensor field, $\xi $ is the Reeb vector field, $\eta $
is a $1-$form and $g$ is the Riemannian metric. The linear frame bundle's
structural group $\text{GL}_{m} \mathbb R$ in an almost contact manifold $M$ is reducible to 
$U(n)\times \left\{ 1\right\} $. Moreover $(\phi ,\xi ,\eta ,g)$
-structure satisfies the following conditions \cite{blair2},

\begin{equation}
    \phi^\mu_{\phantom{\mu}\alpha} \phi^\alpha_{\phantom{\alpha}\nu} = -\delta^\mu_\nu + \xi^\mu\eta_\nu, \label{2.1}
\end{equation}

and

\begin{equation}
    \eta_\mu\xi^\mu = 1, \quad \phi^\mu_{\phantom{\mu}\nu}\xi^\nu = 0,\quad \eta_\mu \phi^\mu_{\phantom{\mu}\nu} = 0. \label{2.2}
\end{equation}

Additionally, because $U(n)\subset SO(2m+1)$, $M$ admits a Riemannian
metric $g$ satisfying

\begin{equation}
    g_{\alpha\beta} \phi^\alpha_{\phantom{\alpha}\mu} \phi^\beta_{\phantom{\beta}\nu} = g_{\mu\nu} - \eta_\mu \eta_\nu. \label{2.3}
\end{equation}

A metric of this type is known as an associated metric of the almost contact manifold $(M,\phi ,\xi ,\eta ,g)$. {The $(1,1)-$tensor field $\phi $ is anti-symmetric and $\eta $ is metric dual of $\xi $ so we have} 

\begin{equation}
    \phi^\mu_{\phantom{\mu}\nu} = - \phi_\nu^{\phantom{\nu}\mu}, \quad \text{and, } g_{\mu\nu} \xi^\nu = \eta_\mu.\label{2.4}
\end{equation}

Assume $M$ is an oriented $n$-dimensional Riemannian manifold $(n\geq 2)$. A
charged particle travelling across a manifold while being affected by a
magnetic field is represented as a magnetic curve. A closed $2-$form $F$ is
a magnetic field in $(M,g)$. The $(1,1)-$tensor field $\phi $ that
corresponds to the Lorentz force of a magnetic field $F$ on $(M,g)$ is
defined by \cite{cabrerizo2009contact}:

\begin{equation}
    g_{\mu\alpha} \phi^\alpha_{\phantom{\alpha}\nu} = F_{\mu\nu}. \label{2.5}
\end{equation}

\begin{definition} \label{a-Kenmotsu}
An $\alpha$-Kenmotsu manifold is an almost contact manifold satisfying the following conditions :
\begin{itemize}
\item[(i)] $d \eta=0$.
\item[(ii)] $d F= 2 \alpha \ \eta \wedge F \  ( \alpha \in \mathbb R - \{0\})$.
\item[(iii)] The Nijenhuis tensor $ N_\phi(X,Y) $ given by the following relation vanishes for any $X,Y \in \Gamma(TM)$. 
  $$ N_\phi(X,Y) =[\phi X, \phi Y]-\phi[\phi X, Y]-\phi[ X, \phi Y]+ \phi^2 [X,Y] $$
\end{itemize} 
\end{definition}
Moreover, the following relation holds for an $\alpha$-Kenmotsu manifold for any vector field $X$ \cite{alfa-Kenmotsu}

\begin{align}
    &\nabla_X \xi= \alpha (X + h(X)-\eta(X) \xi)=\alpha (h (X)-\phi^2(X) ) \nonumber  \\ 
   & \text{ or } \   \mathrm{\nabla}_\mu\xi^\nu=\alpha \ ( {h_\mu}^\nu - {\phi_\mu}^\sigma{\phi_\sigma}^\nu )  
\end{align}

\noindent where $h$ is a trace-free (1,1)-tensor field defined as $h:=\frac{1}{2\alpha} \phi (\mathcal{L}_\xi \phi)$, $\mathcal{L}$ denoting Lie derivative operator\footnote{Note that all conditions stated above apply to ($-\alpha$)-Kenmotsu manifolds by changing $\alpha$ to $-\alpha$.}.

A curve $\gamma(t)$ on $M$ is called a magnetic curve if it satisfies the Lorentz equation:

\begin{equation}
 \mathrm{\nabla}_{\gamma^\prime}\gamma^\prime= V \times  \gamma^\prime = \phi (\gamma^\prime) 
\end{equation}
\noindent or in index notation  

\begin{equation}
\frac{\mathrm d x^\nu}{\mathrm dt} \nabla_\nu \frac{\mathrm d x^\mu}{\mathrm dt} = \phi^\mu_{\phantom{\mu}\alpha} \frac{\mathrm d x^\alpha}{\mathrm dt} \nonumber
\end{equation}

\noindent where $V$ is a vector field on $M$ associated with $F$ such that $ F_V= i_V dv_g$ ($i$ is denoting the interior product on $M$ and $dv_g$ is the volume form of $M$), and $\nabla $ is the Levi-Civita connection on $M$~\cite{Cabrerizo2}. A normal
magnetic curve is a magnetic curve whose arclength parameterization satisfies $|\gamma'(t)|=1$. If a magnetic curve $\gamma (t)$
fulfills the equation $\nabla _{\gamma'}\gamma'=0$, it is referred to as a geodesic curve. 

A vector field $K=K^\mu \partial_\mu$ on $M$ is said to be Killing vector field if it satisfies the following Killing equation which is written in covariant form.

\begin{equation} \label{Killing_general}
    \nabla_\mu K_\nu + \nabla_\nu K_\mu = 0,
\end{equation}

Magnetic fields $F$ obtained from Killing vector fields are called Killing magnetic fields and the trajectories corresponding to the Killing magnetic fields are called the Killing magnetic curves. Hence, Killing magnetic curves can be viewed as a sub-class of general magnetic curves which require only the existence of a divergence-free vector field\footnote{For instance, $B = c_1\ \cos(\alpha x) \sin(\alpha y)\partial_x-c_1\ \sin(\alpha x ) \cos(\alpha y)\partial_y+c_2\ e^{2\alpha z}\partial_z$ is a divergence-free vector field and naturally defines a magnetic field on  $\mathbb H^3$ but it does not satisfy Killing equation \eqref{Killing_general}, thus it is not a Killing vector field.}. A Killing magnetic field is defined by the closed 2-form $ F_K= i_K dv_g$ for a Killing vector field $K$ on $M$. Here the closeness of $F_K$ is guaranteed by Killing vectors being divergence-free. Then the Lorentz force $\phi_K$ corresponding to the Killing magnetic field $F_K$, and the Lorentz equation become the following

\begin{align} \label{Lorentz}
   \phi_K (X)= K \times X \ , \quad  \mathrm{\nabla}_{\gamma^\prime}\gamma^\prime= K \times  \gamma^\prime.
\end{align}

Note that $\phi_K$ in \eqref{Lorentz} is not the same $\phi$ of the original contact structure which has to be compatible with Reeb vector field $\xi$ and dual 1-form $\eta$ \cite{F_Killing}. Here the index $K$ has been used to emphasize the fact that the chosen Killing vector field defines $\phi_K$.

\section{Geometric Structure of $\mathbb H^3$}\label{gen-struct-H3}
We recall some relevant geometric properties for the hyperbolic 3-space $\mathbb H^3$ in this  section. $\mathbb H^3(-\alpha^2)$ is isometric to a solvable Lie group $G_\alpha$ \cite{KokubuH3}:

\begin{align}
G_\alpha=\left\{\left(\begin{matrix}\begin{matrix}1\\\begin{matrix}0\\\begin{matrix}0\\0\\\end{matrix}\\\end{matrix}\\\end{matrix}&\begin{matrix}0\\\begin{matrix}e^{\alpha z}\\\begin{matrix}0\\0\\\end{matrix}\\\end{matrix}\\\end{matrix}&\begin{matrix}\begin{matrix}0\\\begin{matrix}0\\\begin{matrix}e^{\alpha z}\\0\\\end{matrix}\\\end{matrix}\\\end{matrix}&\begin{matrix}z\\\begin{matrix}x\\\begin{matrix}y\\1\\\end{matrix}\\\end{matrix}\\\end{matrix}\\\end{matrix}\\\end{matrix}\right)\left| (x,y,z) \in \mathbb{R}^3\ \right.\right\} \subset \text{GL}_4\mathbb{R} \nonumber
\end{align}
equipped with left-invariant metric

\begin{equation}
    g=e^{-2\alpha z}\mathrm dx^2\ +e^{-2\alpha z}\mathrm dy^2\ +\mathrm dz^2\
\end{equation}

 It has also been shown that $\mathbb{H}^3$ can be represented by the quotient group $\text{SL}_2\mathbb{C}/\text{SU}_2 $ as a Riemannian symmetric space in \cite{DorfmeisterH3}. In this study we are using  differential geometric approach rather than group theoretic approach. We adopt the following global orthonormal frame and the corresponding dual co-frame on $\mathbb H^3(-\alpha^2)$
\begin{eqnarray}
    \eone = e^{\alpha z} \frac{\partial}{\partial x} ,\quad \etwo = e^{\alpha z} \frac{\partial}{\partial y}, \quad \ethree = \frac{\partial}{\partial z} \\ \nonumber
         \mathbf{e^1} = e^{-\alpha z} \text{d}x ,\quad   \mathbf{e^2} = e^{-\alpha z} \text{d}y , \quad   \mathbf{e^3} =\text{d}z 
\end{eqnarray}

 The Christoffel coefficients in the coordinate basis were obtained by direct calculation (lower two indices are symmetric, and we list only the nonzero terms):

\begin{align}
    \Gamma^x_{xz} = -\alpha, \quad \Gamma^y_{yz} = -\alpha, \quad \Gamma^z_{xx} = \alpha e^{-2\alpha z}, \quad \Gamma^z_{yy} = \alpha e^{-2\alpha z}. 
\end{align}

 We use this information to calculate the the following table of covariant derivatives ($i$ denotes the rows, and $j$ denotes the columns)
\begin{equation}
  \nabla_{\mathbf{e_i}}\mathbf{e_j}=\left(\begin{matrix}\alpha \ethree&0&-\alpha \eone\\0&\alpha \ethree&-\alpha \etwo\\0&0&0\\\end{matrix}\right),\ \ \ \ i,j\ \in \{1,2,3\}  
\end{equation}

\noindent and Lie brackets 

\begin{align}
   \left[\mathbf{e_i},\mathbf{e_j}\right]=\left(\begin{matrix}0&0&-\alpha \eone\\0&0&-\alpha \etwo\\\alpha \eone&\alpha \etwo&0\\\end{matrix}\right), \ \ \ \ i,j\in \{1,2,3\} 
\end{align}

 Killing vector fields of $\mathbb H^3 $ is obtained by solving \eqref{Killing_general} (see Appendix~\ref{app:KillingH3} for details)

\begin{align}\label{Killingvecs}
         &\mathbf{K_1} = \partial_x, \quad \mathbf{K_2} = \partial_y, \quad \mathbf{K_3} = y\partial_x - x\partial_y, \quad  \mathbf{K_4} = \alpha x \partial_x + \alpha y \partial_y + \partial_z, \nonumber \\
       &\mathbf{K_5} = \left( \frac{\alpha }{2}\left( x^{2}-y^{2}\right) -\frac{e^{2 \alpha z}}{2\alpha }\right) \partial _{x}+\alpha x y \partial _{y}+x\partial _{z} , \nonumber \\
    &\mathbf{K_6} = \alpha x y \partial _{x}+\left( \frac{\alpha }{2}\left(y^{2}-x^{2}\right) -\frac{e^{2 \alpha z}}{2\alpha }\right) \partial_{y}+y\partial _{z}   
\end{align}

 These vector fields form a basis of a Lie algebra of Killing vector fields whose Lie brackets are obtained as
\begin{align}
&[\mathbf{K_1},\mathbf{K_2}]=0, \ \ [\mathbf{K_1},\mathbf{K_3}]=-\mathbf{K_2}, \ \ [\mathbf{K_1},\mathbf{K_4}]= \alpha \mathbf{K_1}, \ \ [\mathbf{K_1},\mathbf{K_5}]= \mathbf{K_4}, \ \ [\mathbf{K_1},\mathbf{K_6}]= \alpha \mathbf{K_3} \nonumber \\ 
&[\mathbf{K_2},\mathbf{K_3}]=\mathbf{K_1}, \quad [\mathbf{K_2},\mathbf{K_4}]=\alpha \mathbf{K_2}, \quad [\mathbf{K_2},\mathbf{K_5}]= -\alpha \mathbf{K_3}, \quad [\mathbf{K_2},\mathbf{K_6}]= \mathbf{K_4}\nonumber \\ 
&[\mathbf{K_3},\mathbf{K_4}]=0, \quad [\mathbf{K_3},\mathbf{K_5}]= \mathbf{K_6}, \quad [\mathbf{K_3},\mathbf{K_6}]= - \mathbf{K_5} \nonumber \\ 
&[\mathbf{K_4},\mathbf{K_5}]=\alpha \mathbf{K_5}, \quad [\mathbf{K_4},\mathbf{K_6}]= \alpha \mathbf{K_6}, \quad [\mathbf{K_5},\mathbf{K_6}]= 0
\end{align}
 It is known that $\mathbb H^3(-\alpha^2)$ is a ($-\alpha$)-Kenmotsu manifold \cite{Inoguchi}. Here we summarize its almost contact structure. (1,1)-tensor $\phi$ in Section \ref{sec:prelim} can be chosen such that it satisfies the following

\begin{align}
    &\phi(\eone)=\etwo \ , \quad \phi(\etwo)=-\eone \ , \quad \phi(\ethree)=0 \quad \text{or} \nonumber \\
    &\phi(\partial_x)= \partial_y \ , \quad  \phi(\partial_y)= -\partial_x \ , \quad  \phi(\partial_z)= 0 
\end{align}
   
 Equations \eqref{2.1}-\eqref{2.4} are satisfied for $\xi=\partial_z$ and $\eta=\text{d}z$. Moreover, the coefficients of the 2-form $F_{\mu \nu}$ can be computed as a matrix by using \eqref{2.5}

\begin{align}\label{Fmunu}
    F_{\mu \nu}= \left(\begin{matrix}0&e^{-2\alpha  z}&0\\-e^{-2\alpha  z}&0&0\\0&0&0\\\end{matrix}\right)
\end{align}

Consequently, 2-form $F$ is obtained from \eqref{Fmunu} as

\begin{equation}
  F= \frac{1}{2} F_{\mu \nu} d x^\mu \wedge d x^\nu = e^{-2\alpha  z} d x \wedge d y  
\end{equation}

It is easy to check that the 2-form F and $\eta$ satisfies the following relation which is one of the conditions of  $(-\alpha)$-Kenmotsu manifolds.
\begin{equation}
 d F= -2 \alpha \ \eta \wedge F = -2 \alpha  e^{-2\alpha  z} d x \wedge d y \wedge d z  
\end{equation}

Finally, one needs to check the third condition of Definition-\ref{a-Kenmotsu} for $\mathbb H^3$. We show one example calculation for $N_\phi(e_1,e_2)$ , other combinations of basis vectors can be computed similarly. Thus, vanishing of Nijenhuis tensor ensures that $\mathbb H^3$ is a ($-\alpha$)-Kenmotsu manifold.
\begin{align}
    N_\phi(e_1,e_2)&=[\phi (e_1), \phi (e_2)]-\phi[\phi (e_1), e_2]-\phi[ e_1, \phi (e_2)]+ \phi[\phi[e_1,e_2]]  \nonumber \\
    &=[e_2,-e_1]-\phi[e_2, e_2]-\phi[e_1,-e_1]+\phi[\phi(0)]=0
\end{align}

 A remark should be made here about special case of magnetic curves on $\mathbb H^3$. Unlike other almost contact manifolds, fundamental 2-form $F$ coming from the contact structure is not closed on $\mathbb H^3$ hence it does not correspond to a magnetic field. In addition, Reeb vector field $\xi$ of $\mathbb H^3$ is not Killing. Similar situation appears in $Sol_3$ space, its Reeb vector field is not Killing but divergence-free ($\nabla \cdot \xi =0 $) and there exists a closed 2-form corresponding to a magnetic field for the Reeb vector of $Sol_3$ space \cite{Erjavec2}. We construct and utilize the following lemma for $\alpha$-Kenmotsu manifolds. 

\begin{lemma} \label{lemma3.1}
Let $(M,\phi,\xi,\eta,g) $ be a 3-dimensional $\alpha$-Kenmotsu manifold. Then there does not exist any magnetic curve $\gamma$ associated with the Reeb vector field $\xi$ of the $\alpha$-Kenmotsu manifold.
\end{lemma}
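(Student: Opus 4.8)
The plan is to show that the Reeb vector field $\xi$ of an $\alpha$-Kenmotsu manifold is not divergence-free, so that the $2$-form $i_\xi\, dv_g$ it would induce fails to be closed. Since a magnetic field is by definition a closed $2$-form, no magnetic field can be associated with $\xi$, and consequently no magnetic curve associated with $\xi$ can exist. The proof therefore bypasses any attempt to integrate a Lorentz equation and instead rules out the very existence of the underlying magnetic field.

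First I would recall the standard identity $d(i_V\, dv_g) = (\operatorname{div} V)\, dv_g$, valid for any vector field $V$ on an oriented Riemannian manifold. This is precisely the fact, already noted in the text, that closedness of $F_V = i_V\, dv_g$ is equivalent to $V$ being divergence-free. Thus the whole question reduces to evaluating $\operatorname{div}\xi = \nabla_\mu \xi^\mu$.

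Next I would compute this divergence from the $\alpha$-Kenmotsu structure equation $\nabla_X \xi = \alpha\,(h(X) - \phi^2(X))$ by taking its trace as a $(1,1)$-tensor. Since $h$ is trace-free by hypothesis, its contribution vanishes. For the remaining term, contracting \eqref{2.1} gives $\phi^2 = -\operatorname{Id} + \eta\otimes\xi$, whose trace is $-n + 1$, equal to $-2$ when $n = 3$. Hence $\operatorname{div}\xi = \alpha\,(0 - (-2)) = 2\alpha$.

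Finally, because $\alpha \in \mathbb{R} - \{0\}$ by Definition~\ref{a-Kenmotsu}, we obtain $\operatorname{div}\xi = 2\alpha \neq 0$, so $i_\xi\, dv_g$ is not closed and therefore does not define a magnetic field. Since the Lorentz force governing any magnetic curve must arise from a genuine, closed magnetic $2$-form, the candidate equation $\nabla_{\gamma'}\gamma' = \xi \times \gamma'$ is not induced by a magnetic field, and no magnetic curve associated with $\xi$ exists. The algebra here is routine; the only step needing care is the conceptual one, namely linking the nonexistence of the curve to the failure of closedness of the candidate $2$-form rather than trying to solve a Lorentz equation whose right-hand side does not in fact come from a magnetic field.
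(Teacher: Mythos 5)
Your proposal is correct and follows essentially the same route as the paper's proof: take the trace of the structure equation $\nabla_X\xi = \alpha\,(h(X)-\phi^2(X))$, use that $h$ is trace-free and that contracting \eqref{2.1} gives $\operatorname{tr}(\phi^2)=-2$ in dimension $3$, and conclude $\nabla_\mu\xi^\mu = 2\alpha \neq 0$, so $\xi$ cannot define a magnetic field. Your explicit invocation of the identity $d(i_V\, dv_g) = (\operatorname{div} V)\, dv_g$ merely spells out the closedness--divergence link that the paper cites by definition; the substance is identical.
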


\begin{proof}
For any $\alpha$-Kenmotsu manifold we have the following relations. 
$$ \ {{\phi_\beta}^\mu\ \phi_\mu}^\sigma=-{\delta\ }_\beta^\sigma+\xi^\sigma\eta_\beta   \ , \quad  \mathrm{\nabla}_\mu\xi^\nu=\alpha \ ( {h_\mu}^\nu - {\phi_\mu}^\sigma{\phi_\sigma}^\nu )$$ 
$$ \text{Tr}(\mathrm{\nabla}_\mu\xi^\nu)=\alpha \ \text{Tr}({h_\mu}^\nu - {\phi_\mu}^\sigma{\phi_\sigma}^\nu)  \Rightarrow \nabla_\mu\xi^\mu=-\alpha\ {\phi_\mu}^\sigma{\phi_\sigma}^\mu\ $$
$$ \text{Tr}({\phi_\mu}^\sigma{\phi_\beta}^\mu)=\text{Tr}(-{\delta\ }_\beta^\sigma+\xi^\sigma\eta_\beta\ )  \ \Rightarrow{\phi_\mu}^\sigma{\phi_\sigma}^\mu=-{\delta\ }_\sigma^\sigma+\xi^\sigma\eta_\sigma=-3+1=-2 $$
Thus, 
$$\nabla_\mu\xi^\mu=-\alpha\ {\phi_\mu}^\sigma{\phi_\sigma}^\mu=2\ \alpha  \Rightarrow\ \nabla \cdot \xi \neq 0$$
By definition, a magnetic curve requires the existence of a divergence free vector field in 3-dimensions. Since Reeb vector field of $\alpha$-Kenmotsu manifold $\xi$ will always have a non-vanishing divergence, a magnetic curve can not exist associated with $\xi . \ \ $ 
\end{proof}

\begin{corollary}
Lemma~\ref{lemma3.1} also applies to $(-\alpha)$-Kenmotsu manifolds with $\alpha \rightarrow -\alpha $ which gives $\nabla \cdot \xi=-2 \alpha$. There does not exist any magnetic curve $\gamma$ associated with the Reeb vector field $\xi$ of $\mathbb H^3$ since it is a 3-dimensional $(-\alpha)$-Kenmotsu manifold.
\end{corollary}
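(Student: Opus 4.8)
The plan is to reduce the entire question to whether the Reeb field $\xi$ is divergence-free. The preliminaries established that on a $3$-dimensional Riemannian manifold a magnetic field is encoded by a divergence-free vector field through $F_V = i_V\,dv_g$, and that the closedness of this $2$-form (equivalently $dF=0$) is precisely what guarantees a curve satisfying the Lorentz equation. So to prove non-existence it suffices to show that $\nabla_\mu \xi^\mu \neq 0$ for the Reeb field of \emph{any} $\alpha$-Kenmotsu manifold: if the divergence never vanishes, then $i_\xi\,dv_g$ is never closed, no associated magnetic field exists, and hence no associated magnetic curve can exist.

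To compute the divergence I would start from the $\alpha$-Kenmotsu structure relation recalled in the preliminaries,
\[
\nabla_\mu \xi^\nu = \alpha\,\bigl(h_\mu{}^\nu - \phi_\mu{}^\sigma \phi_\sigma{}^\nu\bigr),
\]
and contract $\nu = \mu$ to obtain $\nabla_\mu \xi^\mu = \alpha\,\bigl(\operatorname{Tr} h - \phi_\mu{}^\sigma \phi_\sigma{}^\mu\bigr)$. The first trace vanishes because $h = \tfrac{1}{2\alpha}\phi(\mathcal{L}_\xi \phi)$ is by definition trace-free. The second trace is fixed purely by the almost contact identity \eqref{2.1}: contracting $\phi_\mu{}^\sigma \phi_\sigma{}^\mu = -\delta_\mu^\mu + \xi^\mu \eta_\mu$, then using $\dim M = 3$ together with the normalization $\eta_\mu \xi^\mu = 1$ from \eqref{2.2}, gives $\phi_\mu{}^\sigma \phi_\sigma{}^\mu = -3 + 1 = -2$.

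Substituting back yields
\[
\nabla_\mu \xi^\mu = \alpha\,\bigl(0 - (-2)\bigr) = 2\alpha .
\]
Since the definition of an $\alpha$-Kenmotsu manifold requires $\alpha \in \mathbb{R}\setminus\{0\}$, this divergence is nonzero everywhere, so $\xi$ can never act as the divergence-free generator of a magnetic field, and the claimed non-existence follows at once.

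The calculation is short; the step I would be most careful about is the conceptual reduction in the first paragraph rather than any algebra. One must confirm that the relevant notion of ``magnetic curve associated with $\xi$'' is the one in which $\xi$ itself plays the role of the generating vector field (the $V$, or $K$, of the Lorentz force), so that the genuine obstruction is the failure of $\operatorname{div}\xi = 0$. Once that identification is pinned down, the vanishing of $\operatorname{Tr} h$ and the value $\operatorname{Tr}\phi^2 = -2$ do all the remaining work.
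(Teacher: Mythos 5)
Your proposal is correct and follows essentially the same route as the paper: the paper's Lemma~\ref{lemma3.1} proof is exactly your computation (trace of $\nabla_\mu\xi^\nu$, trace-freeness of $h$, and $\phi_\mu{}^\sigma\phi_\sigma{}^\mu = -3+1 = -2$ from the almost contact identity), and the corollary then follows because the sign flip $\alpha \to -\alpha$ for $(-\alpha)$-Kenmotsu manifolds still yields the nonzero divergence $\nabla\cdot\xi = -2\alpha$, with $\mathbb{H}^3$ being such a manifold as established in Section~\ref{gen-struct-H3}. Your conceptual reduction --- nonzero divergence implies $i_\xi\,dv_g$ is not closed, hence no magnetic field and no magnetic curve --- is precisely the paper's closing argument.
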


We also note that Killing vector fields of $\mathbb H^3$ are not compatible with the contact structure of $(-\alpha)$-Kenmotsu manifolds. Their metric duals are always non-closed (d$\eta_K \neq 0$)  which violates one of the conditions of $(-\alpha)$-Kenmotsu manifolds. However, this situation poses no obstruction in obtaining Killing magnetic curves which is only related with manifold’s intrinsic geometric structure and solution of Lorentz equation.

\section{The Geodesic Equation and its Solution\label{sec:geodesic_eom}}

Here we define the geodesic equation in Subsection~\ref{sec:deriv_geodesic_eqn} then solve it in Subsection~\ref{sec:geodesic_eqn_solution}.

\subsection{Derivation of the Geodesic Equation\label{sec:deriv_geodesic_eqn}}

Let $\gamma'(t)$ be a curve in $\mathbb H^3$. If it satisfies the equation, $\nabla_{\gamma'}\gamma' = 0$, it is said to be a \emph{geodesic curve}.  In the orthonormal basis, $\{\eone,\etwo,\ethree\}$ we can write $\gamma' = \dot\gamma^1 \eone + \dot\gamma^2 \etwo + \dot\gamma^3 \ethree$ (a `dot' means derivative with respect to time) and the geodesic equation becomes:

\begin{align}
    \nabla_{\gamma'} \gamma' &= (\gamma' \cdot \nabla) \gamma',\\
    &= \dot\gamma^i \nabla_{\mathbf{e_i}} (\dot\gamma^j \mathbf{e_j}),\\
    &= \dot\gamma^i (\nabla_{\mathbf{e_i}} \dot\gamma^j)\mathbf{e_j} + \dot\gamma^i \dot\gamma^j \nabla_{\mathbf{e_i}} \mathbf{e_j},\\
    &= \partial_t \dot\gamma^j \mathbf{e_j} + \dot\gamma^i \dot\gamma^j \nabla_{\mathbf{e_i}} \mathbf{e_j},\\
    &= 0.
\end{align}

When the covariant derivatives ($\nabla_{\mathbf{e_i}} \mathbf{e_j}$) are taken into account, we find the following three equations:

\begin{align}
    \ddot\gamma^1 - \alpha \dot\gamma^1 \dot\gamma^3 &= 0,\label{eq:geodesic_Gamma_1_eqn}\\
    \ddot\gamma^2 - \alpha \dot\gamma^2 \dot\gamma^3 &= 0,\label{eq:geodesic_Gamma_2_eqn}\\
    \ddot \gamma^3 + \alpha \left(\dot\gamma^1\right)^2 +  \alpha \left(\dot\gamma^2\right)^2 &= 0. \label{eq:geodesic_Gamma_3_eqn}
\end{align}

\subsection{Solution of the Geodesic Equation\label{sec:geodesic_eqn_solution}}

The orthonormal basis has been useful for obtaining the geodesic equation. The relation between the components of $\gamma'$ in the orthonormal basis ($\dot\gamma^i$) and the coordinate basis ($\dot x^i$) is simply: $\dot\gamma^1 = \dot x e^{-\alpha z}, \dot\gamma^2 = \dot y e^{-\alpha z}, \dot\gamma^3 = \dot z$. Let us re-write the geodesic equations of motion, Equations~(\ref{eq:geodesic_Gamma_1_eqn},\ref{eq:geodesic_Gamma_2_eqn},\ref{eq:geodesic_Gamma_3_eqn}), using the coordinate basis (where $\gamma' = \dot x \partial_x + \dot y \partial_y + \dot z \partial_z$) and write these equations as:

\begin{align}
    \ddot x - 2\alpha \dot x \dot z &= 0,\\
    \ddot y - 2\alpha \dot y \dot z &= 0,\\
    \ddot z + \alpha (\dot x^2 + \dot y^2) e^{-2\alpha z} &= 0. \label{eq:geodesic_ddot_z}
\end{align}

The first two equations are of first order in $\dot x$ and $\dot y$ and can be easily integrated:

\begin{align}
    \dot x &= c_1 e^{2 \alpha z},\label{eq:geodesic_dot_x}\\
    \dot y &= c_2 e^{2 \alpha z},\label{eq:geodesic_dot_y}
\end{align}

\noindent where $c_1, c_2$ are constants of integration. Physically, these constants are proportional to initial velocity in $x$ and $y$ directions. For example, if the initial time is $t_0$, $c_1 = v_{0x} e^{-2\alpha z(t_0)}$ and $c_2 = v_{0y} e^{-2\alpha z(t_0)}$. {$c_1, c_2$ are linearly related to velocities in $x,y$ directions on the $z = z(t_0)$ plane. Actually, since the differential equations are autonomous in $t$, instead of $t_0$ one may choose any other $t_1$ in order to specify the velocities. As we shall see later, this is the reason why we kept the constant $c_4$ (which appears in the form of $t - c_4$) in our solutions.} When we put these results in Equation~(\ref{eq:geodesic_ddot_z}) we obtain:

\begin{align}
    \ddot z + \alpha (c_1^2 + c_2^2) e^{2\alpha z} = 0.
\end{align}

We multiply this equation with $2 \dot z$ and obtain:

\begin{align}
    \partial_t(\dot z^2) + (c_1^2 + c_2^2) \partial_t e^{2\alpha z} &= 0,\\
    \intertext{we integrate with respect to time and obtain:}
    \dot z^2 + (c_1^2 + c_2^2) e^{2\alpha z} &= c_3^2,\label{eq:geodesic_dot_z_2}
\end{align}

\noindent where $c_3 > 0$ or equals to zero if the particle stands still, that is $\dot x = \dot y = \dot z = 0$. {The right hand side of (\ref{eq:geodesic_dot_z_2}) must be non-negative because the expression on the left hand side is non-negative. This is why we chose to denote the right hand side as $c_3^2$. As a matter of convention we chose $c_3 \geq 0$. The solutions of the differential equations are invariant under $c_3 \leftrightarrow -c_3$ as can be seen in Equations~(\ref{eq:geodesic_exp_2az}-\ref{eq:geodesic_y}). Moreover, $c_3$ is related to the initial velocities through $c_3^2 = v_{0z}^2 + (v_{0x}^2 + v_{0y}^2) e^{-2\alpha z(t_0)}$}. {In essence $c_3$ is the speed (in $\mathbb H^3$) of the geodesic curve which can be obtained through $\mathrm ds^2 / \mathrm dt^2$ where $\mathrm ds^2$ is given by the $\mathbb H^3$ metric. It is seen that $|\gamma'|^2 = c_3^2$ and $c_3$ should be constant in order $\gamma'$ to define a geodesic curve. In order to see that apply $\nabla$ on both sides and obtain $2 \nabla_{\gamma'} \gamma' = \nabla c_3^2$. By scaling the time parameter it is possible to map $c_3 \to 1$, however we favored the option to display each constant of integration. This is the reason why we kept the $c_3$ dependence explicit.} Since we are interested in \emph{motion}, from now on we suppose $c_3 > 0$. We can easily integrate Equation~(\ref{eq:geodesic_dot_z_2}) and obtain ($c_4$ is a constant of integration):

\begin{align}
    t - c_4 &= \int \mathrm dz \left(c_3^2 - (c_1^2 + c_2^2) e^{2\alpha z}\right)^{-1/2} \nonumber \\
    &= \frac{1}{c_3} \int \mathrm dz \left(1 - A e^{2\alpha z}\right)^{-1/2}.\\
    \intertext{where $A = (c_1^2 + c_2^2)/c_3^2$. Let $u = A e^{2\alpha z}$, and obtain:}
    &= \frac{1}{2\alpha c_3} \int \frac{\mathrm du}{u\sqrt{1-u}}.
    \intertext{Evaluating the integral yields \cite{mathematica}:}
    &= -\frac{1}{\alpha c_3} \tanh^{-1}(\sqrt{1-u}).
\end{align}

By inverting this equation and using the definition of $u$ in terms of $z$, we obtain:

\begin{align}\label{eq:geodesic_exp_2az}
    e^{2\alpha z} &= \frac{c_3^2}{c_1^2 + c_2^2} \left[1 - \tanh^2(\alpha c_3 (t - c_4))\right] = \frac{c_3^2}{c_1^2 + c_2^2} \sech^2(\alpha c_3 (t - c_4)) \nonumber \\ 
    z(t)&=\frac{1}{2\alpha} \text{Log}\left(\frac{{c_3}^2}{({c_1}^2+{c_2}^2)} \sech^2(\alpha c_3(t-c_4))\right)
\end{align}

Since we know $z$ in terms of time, we can use this information to integrate $\dot x$ and $\dot y$ in Equation~(\ref{eq:geodesic_dot_x}) and in Equation~(\ref{eq:geodesic_dot_y}):

\begin{align}
    x(t) &= \frac{1}{\alpha} \frac{c_1 c_3}{c_1^2 + c_2^2} \tanh(\alpha c_3 (t - c_4)) + c_5,\label{eq:geodesic_x}\\
    y(t) &= \frac{1}{\alpha} \frac{c_2 c_3}{c_1^2 + c_2^2} \tanh(\alpha c_3 (t - c_4)) + c_6,\label{eq:geodesic_y}
\end{align}

\noindent where $c_5, c_6$ are constants of integration. In essence, Equations~(\ref{eq:geodesic_exp_2az},\ref{eq:geodesic_x},\ref{eq:geodesic_y}) gives the equations that define a geodesic curve. As it should be, there are six constants of integration.

\section{Killing magnetic curves in $\mathbb H^3$}\label{sec:K_magnetics}

As we have given in Section~\ref{gen-struct-H3} there are six Killing vectors in $\mathbb H^3$ 

\begin{align}
    K_{(1)} &= B_1 \partial_x,\\
    K_{(2)} &= B_2 \partial_y,\\
    K_{(3)} &= B_3 (y\partial_x - x\partial_y),\\
    K_{(4)} &= B_4 (\alpha x \partial_x + \alpha y \partial_y + \partial_z),\\
    K_{(5)} &= B_5\left( \left( \frac{\alpha }{2}\left( x^{2}-y^{2}\right) -\frac{e^{2 \alpha z}}{2\alpha }\right) \partial _{x}+\alpha x y \partial _{y}+x\partial _{z}  \right), \\
    K_{(6)} &= B_6\left(  \alpha x y \partial _{x}+\left( \frac{\alpha }{2}\left(y^{2}-x^{2}\right) -\frac{e^{2 \alpha z}}{2\alpha }\right) \partial_{y}+y\partial _{z}   \right).
\end{align}
    
where we put the constants $B_i$ to control the \emph{strength} of the magnetic field (note that the \emph{units} of each $B_i$ may vary.). We do this because we could not find analytic solutions for arbitrary $B_i$ and will give equations up to first order in $B_i$. However in all cases, we will give the full equations for arbitrary $B_i$ and then do perturbation analysis.

The Killing magnetic curve generated by the $i^{\text{th}}$ Killing vector field is given by Lorentz equation \eqref{Lorentz}:

\begin{align}
    \nabla_{\gamma'}\gamma' = K_{(i)} \times \gamma',\label{eq:kmc_def}
\end{align}

where the vector product is calculated using the orthonormal basis in a tangent space of the underlying manifold. If we define $F_{(i)} \equiv K_{(i)} \times \gamma'$, we obtain the following expressions (remember $\gamma' = \dot\gamma^i \mathbf{e_i}$):

\begin{align}
    \frac{F_{(1)}}{B_1} &= e^{-\alpha z} (\dot\gamma^2 \ethree - \dot\gamma^3 \etwo),\label{eq:kmc_F_1}\\
    \frac{F_{(2)}}{B_2} &= e^{-\alpha z} (\dot\gamma^3 \eone - \dot\gamma^1 \ethree),\label{eq:kmc_F_2}\\
    \frac{F_{(3)}}{B_3} &= e^{-\alpha z} (-x\dot\gamma^3\eone - y\dot\gamma^3\etwo + (x\dot\gamma^1 + y\dot\gamma^2)\ethree),\label{eq:kmc_F_3}\\
    \frac{F_{(4)}}{B_4} &= (\alpha y \dot\gamma^3 e^{-\alpha z} - \dot\gamma^2)\eone - (\alpha x \dot\gamma^3e^{-\alpha z}-\dot\gamma^1)\etwo + \alpha e^{-\alpha z}(x\dot\gamma^2 - y\dot\gamma^1)\ethree.\label{eq:kmc_F_4}
\end{align}

\begin{eqnarray}
        \frac{F_{(5)}}{B_5} &=& x (\alpha y \dot\gamma^3 e^{-\alpha z} - \dot\gamma^2)\eone +  \left( x \dot\gamma^1 + \frac{e^{\alpha z}}{2 \alpha} \dot\gamma^3+  \frac{e^{-\alpha z}\alpha}{2}  (y^2-x^2)  \dot\gamma^3 \right)\etwo \nonumber \\
  &&+  \left(\frac{e^{-\alpha z} \alpha}{2} ((x^2-y^2) \dot\gamma^2-2 x y \dot\gamma^1) - \frac{e^{\alpha z}}{2 \alpha} \dot\gamma^2 \right)\ethree, \label{eq:kmc_F_5}\\
     \frac{F_{(6)}}{B_6} &=& \left( -y \dot\gamma^2 - \frac{e^{\alpha z}}{2 \alpha} \dot\gamma^3+  \frac{e^{-\alpha z}\alpha}{2}  (y^2-x^2)  \dot\gamma^3 \right)\eone + y (\dot\gamma^1- \alpha x \dot\gamma^3 e^{-\alpha z} )\etwo \nonumber \\
   &&+\left(\frac{e^{-\alpha z} \alpha}{2} ((x^2-y^2) \dot\gamma^1+2 x y \dot\gamma^2) + \frac{e^{\alpha z}}{2 \alpha} \dot\gamma^1 \right)\ethree    \label{eq:kmc_F_6}
\end{eqnarray}  

\subsection{Magnetic Trajectory by the First Killing Vector Field}
Using the expression found for $F_{(1)}$ in Equation~(\ref{eq:kmc_F_1}) and Equation~(\ref{eq:kmc_def}), then turning them into the coordinate basis we obtain the following set of equations:

\begin{align}
    \ddot x - 2\alpha \dot x \dot z &= 0,\\
    \ddot y - 2\alpha \dot y \dot z &= -B_1 \dot z,\\
    \ddot z + \alpha(\dot x^2 + \dot y^2) e^{-2\alpha z} &= B_1 \dot y e^{-2\alpha z}.\label{eq:K_1_ddot_z}
\end{align}

The first and second equations are easily integrated (the results are exact):

\begin{align}
    \dot x &= c_1 e^{2\alpha z},\label{eq:K_1_dot_x}\\
    \dot y &= c_2 e^{2\alpha z} + \frac{B_1}{2\alpha}.\label{eq:K_1_dot_y}
\end{align}

Using these two expressions in Equation~(\ref{eq:K_1_ddot_z}) we obtain:

\begin{equation}
    \ddot z + \alpha(c_1^2 + c_2^2) e^{2\alpha z} - \frac{B_1^2}{4\alpha}e^{-2\alpha z} = 0.
\end{equation}

Multiply with $2\dot z$ then integrate to obtain (the result is exact):

\begin{align}
    \dot z^2 + (c_1^2 + c_2^2) e^{2\alpha z} + \frac{B_1^2}{4\alpha^2} e^{-2\alpha z} &= c_3^2,\label{eq:K_1_dot_z_2}
\end{align}

where $c_3 \geq 0$. In order to determine the Killing magnetic trajectories generated by $B_1 K_{(1)}$ one needs to solve these equations that are first order in $\dot x, \dot y, \dot z$. Since we could not come up with analytical solution, we will do a perturbation analysis up to first order in $B_1$. For that purpose we define $x = x_0 + B_1 x_1, y = y_0 + B_1 y_1, z = z_0 + B_1 z_1$. Seemingly the zeroth order functions $x_0, y_0, z_0$ are the ones that satisfies the geodesic equation of motion which we provided in Section~\ref{sec:geodesic_eom}. After we expand the Equations~(\ref{eq:K_1_dot_x},\ref{eq:K_1_dot_y},\ref{eq:K_1_dot_z_2}) up to first order in $B_1$ we obtain the following differential equations:

\begin{align}
    \dot x_1 &= 2\alpha c_1 e^{2\alpha z_0}z_1,\\
    \dot y_1 &= 2\alpha c_2 e^{2\alpha z_0}z_1 + \frac{1}{2\alpha},\\
    \frac{\dot z_1}{z_1} &= -\alpha(c_1^2 + c_2^2) \frac{e^{2\alpha z_0}}{\dot z_0}.
\end{align}

By integrating the last equation, we obtain:

\begin{align}
    z_1 = c_7 \tanh(\alpha c_3 (t-c_4)),
\end{align}

where $c_7$ is a new constant of integration. By using this expression in above Equations for $\dot x_1$ and $\dot y_1$ we obtain (via Mathematica~\cite{mathematica}):

\begin{align}
    x_1 &= - \frac{c_1 c_3 c_7}{c_1^2 + c_2^2} \sech^2(\alpha c_3(t-c_4)) + c_8\\
    y_1 &= - \frac{c_2 c_3 c_7}{c_1^2 + c_2^2} \sech^2(\alpha c_3(t-c_4)) + \frac{t - c_4}{2\alpha} + c_9,
\end{align}

where $c_8, c_9$ are new constants of integration. We have found the zeroth order functions in Section~\ref{sec:geodesic_eqn_solution}, so the first order solution to magnetic trajectory produced by the first Killing vector is:

\begin{align}
    x &= x_0 - B_1 \frac{c_1 c_3 c_7}{c_1^2 + c_2^2} \sech^2(\alpha c_3(t-c_4)) + \mathcal{O}(B_1^2)\\
    y &= y_0  - B_1 \frac{c_2 c_3 c_7}{c_1^2 + c_2^2} \sech^2(\alpha c_3(t-c_4)) + \frac{B_1(t - c_4)}{2\alpha} + \mathcal{O}(B_1^2)\\
    e^{2\alpha z} &= e^{2\alpha z_0} (1 + 2\alpha B_1 c_7 \tanh(\alpha c_3 (t-c_4))) + \mathcal{O}(B_1)^2.
\end{align}

where we have absorbed $c_8, c_9$ into the constants $c_5,c_6$ that are found in $x_0,y_0$ respectively.

\subsection{Magnetic Trajectory by the Second Killing Vector Field}\label{sec:K_2}
Using the expression found for $F_{(2)}$ in Equation~(\ref{eq:kmc_F_2}) and Equation~(\ref{eq:kmc_def}), then turning them into the coordinate basis we obtain the following set of equations:

\begin{align}
    \ddot x - 2\alpha \dot x\dot z &= B_2 \dot z,\\
    \ddot y - 2\alpha \dot y \dot z &= 0,\\
    \ddot z + \alpha(\dot x^2 + \dot y^2)e^{-2\alpha z} &= -B_2 \dot x e^{-2\alpha z}.
\end{align}

We will not explicitly deal with this case, because by mapping $x\to y, y\to x, B_1\to -B_2$ in the case in the previous Section, we can obtain the solutions for this case, due to a symmetry between the equations of motion in two cases. Hence, we leave the reader to obtain the full (for arbitrary $B_2$) equations of motion and we only write down the first order solutions:

\begin{align}
    x &= x_0 + B_2 \frac{c_1 c_3 c_7}{c_1^2 + c_2^2} \sech^2(\alpha c_3(t-c_4)) - \frac{B_2(t - c_4)}{2\alpha} + \mathcal{O}(B_2^2)\\
    y &= y_0  + B_2 \frac{c_2 c_3 c_7}{c_1^2 + c_2^2} \sech^2(\alpha c_3(t-c_4)) + \mathcal{O}(B_2^2)\\
    e^{2\alpha z} &= e^{2\alpha z_0} (1 - 2\alpha B_2 c_7 \tanh(\alpha c_3 (t-c_4))) + \mathcal{O}(B_2)^2.
\end{align}

\subsection{Magnetic Trajectory by the Third Killing Vector Field}
Using the expression found for $F_{(3)}$ in Equation~(\ref{eq:kmc_F_3}) and Equation~(\ref{eq:kmc_def}), then turning them into the coordinate basis we obtain the following set of equations:

\begin{align}
    \ddot x - 2\alpha \dot x\dot z &= -B_3 x \dot z,\\
    \ddot y - 2\alpha \dot y \dot z &= -B_3 y \dot z,\\
    \ddot z + \alpha(\dot x^2 + \dot y^2)e^{-2\alpha z} &= B_3 (x\dot x + y\dot y) e^{-2\alpha z}.
\end{align}

We will solve these equations of motion using a perturbation analysis. We write down $x = x_0 + B_3 x_1, y = y_0 + B_3 y_1, z = z_0 + B_3 z_1$ up to first order in $B_3$ and note that the functions with zero indices denote solutions of the geodesic equation of motion that we have found in Section~\ref{sec:geodesic_eom}. The differential equations for $x_1,y_1,z_1$ are then:

\begin{align}
    \ddot x_1 - 2\alpha \dot z_0 \dot x_1 - 2\alpha \dot x_0 \dot z_1 &= -x_0 \dot z_0,\label{eq:K_3_ddot_x1}\\
    \ddot y_1 - 2\alpha \dot z_0 \dot y_1 - 2\alpha \dot y_0 \dot z_1 &= -y_0 \dot z_0,\\
    e^{2\alpha z_0} \ddot z_1 - 2\alpha^2 (\dot x_0^2 + \dot y_0^2) z_1 + 2\alpha(\dot x_0 \dot x_1 + \dot y_0 \dot y_1) &= x_0 \dot x_0 + y_0 \dot y_0.\label{eq:K_3_ddot_z1}
\end{align}

Let us integrate the first two Equations. For that purpose we multiply Equation~(\ref{eq:K_3_ddot_x1}) by $e^{-2\alpha z_0}$ and obtain:

\begin{align}
    e^{-2\alpha z_0} \ddot x_1 - 2\alpha \dot z_0 e^{-2\alpha z_0} \dot x_1 - 2\alpha c_1 \dot z_1 = -x_0 \dot z_0 e^{-2\alpha z_0},
\end{align}

and we can write this equation as:

\begin{align}
    \partial_t (e^{-2\alpha z_0} \dot x_1 - 2\alpha c_1 z_1) = \frac{1}{2\alpha} x_0 \partial_t e^{-2\alpha z_0}.
\end{align}

We easily integrate the left hand side of the equation, and use integration by parts on the right hand side, the obtain:

\begin{align}
    \dot x_1 = 2\alpha c_1 e^{2\alpha z_0} z_1 + \frac{x_0}{2\alpha} - \frac{c_1(t-c_4)}{2\alpha}e^{2\alpha z_0} + c_7e^{2\alpha z_0}.\label{eq:K_3_x1_dot}
\end{align}

The case for the $\dot y_1$ is found by using the symmetry $x \leftrightarrow y$, and we can immediately write down:

\begin{align}
    \dot y_1 = 2\alpha c_2 e^{2\alpha z_0} z_1 + \frac{y_0}{2\alpha} - \frac{c_2(t-c_4)}{2\alpha}e^{2\alpha z_0} + c_8e^{2\alpha z_0},\label{eq:K_3_y1_dot}
\end{align}

where $c_7, c_8$ are constants of integration. We put these functions into the Equation~(\ref{eq:K_3_ddot_z1}) and obtain the following expression:

\begin{align}
    e^{-2\alpha z_0} \ddot z_1 + 2\alpha^2(c_1^2 + c_2^2) z_1 = (c_1^2 + c_2^2)(t-c_4) - 2\alpha (c_1 c_7 + c_2 c_8).
\end{align}

Let us define $\tau = \alpha c_3 (t-c_4)$ and rewrite the above equation as, after some algebraic manipulations:

\begin{align}
    \cosh^2(\tau)\partial_\tau^2 z_1+2z_1 = \frac{\tau}{\alpha^3 c_3} - \frac{2(c_1 c_7 + c_2 c_8)}{\alpha(c_1^2 + c_2^2)},
\end{align}

hence we can write:

\begin{align}
    z_1 = z_{\text{1h}} + \frac{\tau}{2\alpha^3 c_3} - \frac{c_1 c_7 + c_2 c_8}{\alpha(c_1^2 + c_2^2)},
\end{align}

where $z_{\text{1h}}$ satisfies the homogeneous equation. We find the homogeneous solution, with the help of Mathematica \cite{mathematica} to be:

\begin{align}
    z_{\text{1h}} = c_9 \tanh(\tau) + c_{10}(1 - \tau\tanh(\tau)),
\end{align}

So we can write:

\begin{align}
    z_1 = \frac{\tau}{2\alpha^3 c_3} - \frac{c_1 c_7 + c_2 c_8}{\alpha(c_1^2 + c_2^2)} + c_9 \tanh(\tau) + c_{10}(1 - \tau\tanh(\tau)).\label{eq:K_3_z1}
\end{align}

We use this information in Equation~(\ref{eq:K_3_x1_dot}) to find $x_1$ by performing integration by Mathematica \cite{mathematica}, and after a few algebraic manipulations we obtain:

\begin{multline}
    x_1 = -\frac{c_1 \tau^2}{4 \alpha ^3 c_3^2}-\frac{c_1 \log (\cosh (\tau))}{2
   \alpha ^3 \left(c_1^2+c_2^2\right)}\\
   +\tau \left(\frac{c_5}{2
   \alpha ^2 c_3}+\frac{c_1^3+c_2^2 c_1}{\alpha ^3
   \left(c_1^2+c_2^2\right){}^2} \tanh (\tau)+\frac{c_1 c_3
   c_{10} \text{sech}^2(\tau)}{c_1^2+c_2^2}\right)\\
   +\frac{1}{\alpha(c_1^2+c_2^2)}\left(\alpha c_1 c_3 c_{10}(c_1^2 + c_2^2)+\alpha c_3 c_7 - 2 c_1 c_3 (c_1 c_7 + c_2 c_8)\right)\tanh(\tau)\\
   -\frac{c_1
   c_3 c_9 \text{sech}^2(\tau)}{c_1^2+c_2^2} + c_{11},\label{eq:K_3_x1}
\end{multline}

where $c_{11}$ is a new integration constant. By using the symmetry between $x_1,y_1$ we can immediately write down the result for $y_1$ as (here $c_{12}$ is a new constant of integration):

\begin{multline}
    y_1 = -\frac{c_2 \tau^2}{4 \alpha ^3 c_3^2}-\frac{c_2 \log (\cosh (\tau))}{2
   \alpha ^3 \left(c_1^2+c_2^2\right)}\\
   +\tau \left(\frac{c_6}{2
   \alpha ^2 c_3}+\frac{c_2^3+c_1^2 c_2}{\alpha ^3
   \left(c_1^2+c_2^2\right){}^2} \tanh (\tau)+\frac{c_2 c_3
   c_{10} \text{sech}^2(\tau)}{c_1^2+c_2^2}\right)\\
   +\frac{1}{\alpha(c_1^2+c_2^2)}\left(\alpha c_2 c_3 c_{10}(c_1^2 + c_2^2)+\alpha c_3 c_8 - 2 c_2 c_3 (c_1 c_7 + c_2 c_8)\right)\tanh(\tau)\\
   -\frac{c_1
   c_2 c_9 \text{sech}^2(\tau)}{c_1^2+c_2^2} + c_{12}.\label{eq:K_3_y1}
\end{multline}

To summarize, the solution for the Killing magnetic curve, up to first order in $B_3$ is as follows:

\begin{align}
    x &= x_0 + B_3 x_1 + \mathcal{O}(B_3^2),\\
    y &= y_0 + B_3 y_1 + \mathcal{O}(B_3^2),\\
    z &= z_0 + B_3 z_1 + \mathcal{O}(B_3^2).
\end{align}

where $x_0, y_0, z_0$ are solutions for the geodesic equation and $x_1,y_1,z_1$ are given in Equations~(\ref{eq:K_3_x1},\ref{eq:K_3_y1},\ref{eq:K_3_z1}) respectively. Last but not least, wihle making use of $x_1,y_1,z_1$, remember that $\tau = \alpha c_3 (t-c_4)$.

\subsection{Magnetic Trajectory by the Fourth Killing Vector Field\label{sec:K_4}}
Using the expression found for $F_{(4)}$ in Equation~(\ref{eq:kmc_F_4}) and Equation~(\ref{eq:kmc_def}), then turning them into the coordinate basis we obtain the following set of equations:

\begin{align}
    \ddot x - 2\alpha \dot x\dot z &= \alpha B_4 y\dot z - B_4 \dot y,\\
    \ddot y - 2\alpha \dot y \dot z &= -\alpha B_4 x\dot z + B_4 \dot x,\\
    \ddot z + \alpha(\dot x^2 + \dot y^2)e^{-2\alpha z} &=\alpha B_4 (x\dot y - \dot x y) e^{-2\alpha z}.\label{eq:K_4_z_ddot}
\end{align}

We will solve these equations of motion using a perturbation analysis. We write down $x = x_0 + B_4 x_1, y = y_0 + B_4 y_1, z = z_0 + B_4 z_1$ upto first order in $B_3$ and note that the function with zero indices are solutions of the geodesic equation of motion that we have found in Section~\ref{sec:geodesic_eom}. For $x_1$ we can write the following (by keeping the first order terms in $B_4$ in the first differential equation above):

\begin{align}
    \ddot x_1 - 2\alpha \dot z_0 \dot x_1 - 2\alpha \dot x_0 \dot z_1 &= \alpha y_0 \dot z_0 - \dot y_0.
\end{align}

By multiplying with $e^{-2\alpha z_0}$, we obtain:

\begin{align}
    \partial_t(\dot x_1 e^{-2\alpha z_0}) &= 2\alpha c_1 \dot z_1 - c_2 + \alpha \dot z_0 y_0 e^{-2\alpha z_0}, \nonumber \\
    &= 2\alpha c_1 \dot z_1 -\frac{1}{2} y_0 \partial_t e^{-2\alpha z_0},
    \intertext{We integrate both sides and use integration by parts on the right hand side and obtain:}
    \dot x_1 e^{-2\alpha z_0} &= 2\alpha c_1 z_1 - c_2 t - \frac{1}{2}y_0 e^{-2\alpha z_0} + \frac{c_2 t}{2} + c_7,
    \intertext{where $c_7$ is a constant of integration. After a few algebraic manipulations, we obtain:}
    \dot x_1 &= 2\alpha c_1 z_1 e^{2\alpha z_0} - \frac{1}{2} c_2 (t-c_4) e^{2\alpha z_0} - \frac{y_0}{2} + c_7 e^{2\alpha z_0}.\label{eq:K_4_x1_dot}
    \intertext{Here we absorbed a multiple of $c_4$ in $c_7$ in order to make the change of variables easier later on. The case for the calculation of $\dot y_1$ follows along similar lines, and the results reads:}
    \dot y_1 &= 2\alpha c_2 z_1 e^{2\alpha z_0} + \frac 12 c_1 (t-c_4) e^{2\alpha z_0} + \frac{x_0}{2} + c_8 e^{2\alpha z_0},\label{eq:K_4_y1_dot}
\end{align}

where $c_8$ is a constant of integration. Using the information we have obtained through Equations~(\ref{eq:K_4_x1_dot},\ref{eq:K_4_y1_dot}) in Equation~(\ref{eq:K_4_z_ddot}) and by regarding the first order terms in $B_4$ only, we obtain:

\begin{align}
    e^{2\alpha z_0} \ddot z_1 - 2\alpha^2(\dot x_0^2 + \dot y_0^2)z_1 = -2\alpha(\dot x_0 \dot x_1 + \dot y_0 \dot y_1) + \alpha(x_0 \dot y_0 - \dot x_0 y_0.)
\end{align}

When we put the values of $x_0, \dot x_1, y_0, \dot y_1$ we obtain:

\begin{align}
    e^{-2\alpha z_0} \ddot z_1 + 2\alpha^2(c_1^2 + c_2^2) z_1 = -2\alpha (c_1 c_7 + c_2 c_8).\label{eq:K_4_z1_ddot}
\end{align}

The particular solution is easy to find:

\begin{align}
    z_{1\text{p}} = -\frac{c_1 c_7 + c_2 c_8}{\alpha (c_1^2 + c_2^2)}.
\end{align}

On the other hand, the homogenous solution $z_{1\text{h}}$ satisfies:

\begin{align}
    e^{-2\alpha z_0} \ddot z_{1\text{h}} + 2\alpha^2(c_1^2 + c_2^2) z_{1\text{h}} = 0,
\end{align}

and when we put the value of $e^{-2\alpha z_0}$ we obtain:

\begin{align}
    \cosh^2(\alpha c_3 (t - c_4)) \ddot z_{1\text{h}} + 2\alpha^2 c_3^2 z_{1\text{h}} = 0.
\end{align}

With the help of Mathematica \cite{mathematica} we find the homogenous solution as:

\begin{align}
    z_{1\text{h}} = -c_{10} + (c_9 + \alpha c_3 c_{10} (t - c_4)) \tanh(\alpha c_3 (t - c_4)),\label{eq:K_4_z1h}
\end{align}

where $c_{9}, c_{10}$ are new constants of integration. All in all, the solution we have found for $z_1$ reads as:

\begin{align}
    z_1 = -c_{10} + (c_9 + \alpha c_3 c_{10}  (t - c_4)) \tanh(\alpha c_3 (t - c_4)) -\frac{c_1 c_7 + c_2 c_8}{\alpha (c_1^2 + c_2^2)}.\label{eq:K_4_z1}
\end{align}

We use the obtained expression for $z_1$ in Equations~(\ref{eq:K_4_x1_dot},\ref{eq:K_4_y1_dot}). After the integrations take place, we obtain:

\begin{multline}
    x_1 = -\frac{\text{sech}^2\left(\alpha  c_3 \left(t-c_4\right)\right)}{8 \alpha  c_2^2}
   \Bigg[c_3 c_2 \left(t-c_4\right) \sinh \left(2 \alpha  c_3
   \left(t-c_4\right)\right)\\
   +2 c_3 \left(\alpha  c_1
   \left(c_{10} \left(2 \alpha  c_3 \left(t-c_4\right)+\sinh
   \left(2 \alpha  c_3 \left(t-c_4\right)\right)\right)+2
   c_9\right)-c_7 \sinh \left(2 \alpha  c_3
   \left(t-c_4\right)\right)\right)\\
   +4 \alpha  c_6 c_2^2 t \cosh
   ^2\left(\alpha  c_3 \left(t-c_4\right)\right)\Bigg] + c_{11}.\label{eq:K_4_x1}
\end{multline}

and

\begin{multline}
    y_1 = \frac{1}{8\alpha  c_2^2} \Bigg[4 \alpha  c_2^2 c_5 \left(t-c_4\right)+2 c_3 \left(c_1
   \left(t-c_4\right)+2 c_8\right) \tanh \left(\alpha  c_3
   \left(t-c_4\right)\right)\\
   -2 \alpha  c_2 c_3 \left(c_{10}
   \left(2 \alpha  c_3 \left(t-c_4\right)+\sinh \left(2 \alpha 
   c_3 \left(t-c_4\right)\right)\right)+2 c_9\right)
   \text{sech}^2\left(\alpha  c_3 \left(t-c_4\right)\right)\Bigg] + c_{12}\label{eq:K_4_y1}
\end{multline}

where $c_{11}, c_{12}$ are new constants of integration. To summarize this section, the solution for the Killing magnetic curve, up to first order in $B_4$ is as follows:

\begin{align}
    x &= x_0 + B_4 x_1 + \mathcal{O}(B_4^2),\\
    y &= y_0 + B_4 y_1 + \mathcal{O}(B_4^2),\\
    z &= z_0 + B_4 z_1 + \mathcal{O}(B_4^2).
\end{align}

where $x_0, y_0, z_0$ are solutions for the geodesic equation and $x_1,y_1,z_1$ are given in Equations~(\ref{eq:K_4_x1},\ref{eq:K_4_y1},\ref{eq:K_4_z1}) respectively.

\subsection{Magnetic Trajectory by the Fifth Killing Vector Field}
Using the expression found for $F_{(5)}$ in Equation~(\ref{eq:kmc_F_5}) and Equation~(\ref{eq:kmc_def}), then turning them into the coordinate basis we obtain the following set of equations:

\begin{align}
 \ddot x - 2\alpha \dot x  \dot z &= B_5 x (-\dot y + \alpha y \dot z) \\
  \ddot y - 2\alpha \dot y  \dot z &= \frac {B_5}{2 \alpha} (2x \alpha \dot x + e^{2 \alpha z} \dot z + \alpha^2 (y^2-x^2) \dot z), \\
  \ddot z + \alpha (\dot x^2 + \dot y^2) e^{-2 \alpha z} &= -\frac {B_5}{2 \alpha} \left( \alpha^2 e^{-2 \alpha z}((y^2-x^2) \dot y + 2xy \dot x) + \dot y \right)
\end{align}

We will solve these equations of motion using a perturbation analysis. We write down $x = x_0 + B_5 x_1, y = y_0 + B_5 y_1, z = z_0 + B_5 z_1$ upto first order in $B_5$ and note that the function with zero indices are solutions of the geodesic equation of motion that we have found in Section~\ref{sec:geodesic_eom}. The differential equation for $x_1$ is found as follows:

\begin{align}
    \ddot x_1 - 2\alpha \dot z_0 \dot x_1 &= 2\alpha \dot x_0 \dot z_1 - x_0 \dot y_0 + \alpha x_0 y_0 \dot z_0,
\end{align}

we multiply this equation by $e^{-2\alpha z_0}$ and then integrate to find,

\begin{align}
    \dot x_1 &= 2\alpha c_1 z_1 e^{2\alpha z_0} - c_2 e^{2\alpha z_0} \int x_0 \mathrm dt - \frac 12 e^{2\alpha z_0} \int x_0 y_0 \partial_t e^{-2\alpha z_0} \mathrm dt + c_7 e^{2\alpha z_0}\label{eq:K_5_x1dot_plain}
\end{align}

where $c_7$ is a constant of integration. Now, let us turn our focus to $y_1$ whose differential equation reads as:

\begin{align}
    \ddot y_1 - 2\alpha \dot z_0 \dot y_1 &= 2\alpha \dot y_0 \dot z_1 + x_0 \dot x_0 + \frac{1}{2\alpha} \dot z_0 e^{2\alpha z_0} + \frac{\alpha}{2} (y_0^2 - x_0^2) \dot z_0,
\end{align}

we multiply with $e^{-2\alpha z_0}$ and integrate to find,

\begin{multline}
    \dot y_1 = 2\alpha c_2 z_1 e^{2\alpha z_0} + c_1 e^{2\alpha z_0}\int x_0 \mathrm dt \\
    + \frac{1}{2\alpha} z_0 e^{2\alpha z_0} - \frac{e^{2\alpha z_0}}{4}\int (y_0^2 - x_0^2)\partial_t e^{-2\alpha z_0} \mathrm dt + c_8 e^{2\alpha z_0},\label{eq:K_5_y1dot_plain}
\end{multline}

where $c_8$ is a constant of integration. Lastly we write down the differential equation for $z_1$:

\begin{align}
    \ddot z_1 - 2\alpha^2 e^{2\alpha z_0} (c_1^2 + c_2^2) z_1 + 2\alpha(c_1 \dot x_1 + c_2 \dot y_1) &= -\frac{\dot y_0}{2\alpha} - \frac{\alpha}{2}\Big[(y_0^2 - x_0^2)c_2 + 2 c_1 x_0 y_0\Big].
\end{align}

When we use the forms of $\dot x_1, \dot y_1$ in Equations~(\ref{eq:K_5_x1dot_plain},\ref{eq:K_5_y1dot_plain}) we obtain:

\begin{multline}
    e^{-2\alpha z_0} \ddot z_1 + 2\alpha^2 (c_1^2 + c_2^2) z_1 = -\frac{c_2}{2\alpha} - \frac{\alpha}{2}e^{-2\alpha z_0} [(y_0^2 - x_0^2)c_2 + 2 c_1 x_0 y_0]\\
    +\alpha c_1 \int x_0 y_0 \partial_t e^{-2\alpha z_0} \mathrm dt - 2\alpha (c_1 c_7 + c_2 c_8)\\
    - c_2 z_0 + \frac{\alpha}{2} c_2 \int (y_0^2 - x_0^2) \partial_t e^{-2\alpha z_0}.\label{eq:K_5_z1_ddot}
\end{multline}

We already calculated the homogeneous solution (see Equation~(\ref{eq:K_4_z1h})):

\begin{align}
    z_{1\text{h}} = -c_{10} + (c_9 + \alpha c_3 c_{10} (t - c_4)) \tanh(\alpha c_3 (t - c_4)),
\end{align}

where $c_9, c_{10}$ are constants of integration. We can calculate a part of particular solution, it is given as follows:

\begin{align}
    z_{1\text{p}} &= -\frac{c_2}{4\alpha^3 (c_1^2 + c_2^2)} - \frac 1\alpha \frac{c_1 c_7 + c_2 c_8}{c_1^2 + c_2^2} + f(t),
\end{align}

where $f(t)$ satisfies:

\begin{multline}
    e^{-2\alpha z_0} \ddot f + 2\alpha^2 (c_1^2 + c_2^2) f = -\frac{\alpha}{2}e^{-2\alpha z_0} [(y_0^2 - x_0^2)c_2 + 2 c_1 x_0 y_0]\\
    - c_2 z_0 +\alpha c_1 \int x_0 y_0 \partial_t e^{-2\alpha z_0} \mathrm dt
     + \frac{\alpha}{2} c_2 \int (y_0^2 - x_0^2) \partial_t e^{-2\alpha z_0} \mathrm dt.
\end{multline}

We simplify the equation and find:

\begin{align}
    e^{-2\alpha z_0} \ddot f + 2\alpha^2 (c_1^2 + c_2^2) f = -c_2 z_0 + \frac{\alpha}{2} c_1 c_2 \int x_0 \mathrm dt - \alpha \left(c_1^2 + \frac{c_2^2}{2}\right)\int y_0 \mathrm dt.
\end{align}

When we do the integrals and put the values of functions $x_0, y_0, z_0$ of the geodesic equation, we find the solutions as:

\begin{align}
    f &= \frac{A_5}{4}[2 \log \cosh(\tau) - \tau \tanh(\tau)] \nonumber \\
    &\quad + \frac{1}{2\alpha^2 (c_1^2 + c_2^2)} \left[\left(\frac{c_1 c_2 c_5}{2c_3} + \frac{c_6(2c_1^2 + c_2^2)}{2}\right)\tau - \frac{c_2}{2\alpha}\log\left(\frac{c_3^2}{c_1^2 + c_2^2}\right)\right]
\end{align}

where

\begin{align}
    A_5 &= \frac{1}{\alpha^3 (c_1^2 + c_2^2)} \left(c_2  + \frac{c_1^2 c_2}{2 (c_1^2 + c_2^2)} + \frac{2c_2 c_3 (2 c_1^2 + c_2^2)}{2(c_1^2 + c_2^2)}\right),
\end{align}

and remember that $\tau = \alpha c_3 (t-c_4)$. In the end, we find $z_1$ as $z_1 = z_{1\text{h}} + z_{1\text{p}}$ where

\begin{align}
    z_{1\text{h}} &= -c_{10} + (c_9 +  c_{10} \tau) \tanh(\tau),\\
    z_{1\text{p}} &= -\frac{c_2}{4\alpha^3 (c_1^2 + c_2^2)} - \frac 1\alpha \frac{c_1 c_7 + c_2 c_8}{c_1^2 + c_2^2} \nonumber \\
    &\quad +\frac{A_5}{4}[2 \log \cosh(\tau) - \tau \tanh(\tau)] \nonumber \\
    &\quad +\frac{1}{2\alpha^2 (c_1^2 + c_2^2)} \left[\left(\frac{c_1 c_2 c_5}{2c_3} + \frac{c_6(2c_1^2 + c_2^2)}{2}\right)\tau - \frac{c_2}{2\alpha}\log\left(\frac{c_3^2}{c_1^2 + c_2^2}\right)\right].
\end{align}

We would like to write $z_1$ as:

\begin{multline}
    z_1 = -c_{10} + (c_9 +  c_{10} \tau) \tanh(\tau) + A_1 + A_2\tau  
    + \frac{A_5}{4}[2 \log \cosh(\tau) - \tau \tanh(\tau)]\label{eq:K_5_z1},
\end{multline}

where

\begin{align}
    A_1 &=  -\frac{c_2}{4\alpha^3 (c_1^2 + c_2^2)} - \frac 1\alpha \frac{c_1 c_7 + c_2 c_8}{c_1^2 + c_2^2} - \frac{1}{2\alpha^2 (c_1^2 + c_2^2)} \frac{c_2}{2\alpha}\log\left(\frac{c_3^2}{c_1^2 + c_2^2}\right),\\
    A_2 &= \frac{1}{2\alpha^2 (c_1^2 + c_2^2)} \left(\frac{c_1 c_2 c_5}{2c_3} + \frac{c_6(2c_1^2 + c_2^2)}{2}\right).
\end{align}

All we need to now to finish this section, is to put the value of $z_1$ we have just provided into Equations~(\ref{eq:K_5_x1dot_plain},\ref{eq:K_5_y1dot_plain}) (for $\dot x_1$ and $\dot y_1$) and integrate them. After a few algebraic manipulations we obtain $\dot x_1, \dot y_1$ from Equations~(\ref{eq:K_5_x1dot_plain},\ref{eq:K_5_y1dot_plain}) as follows:

\begin{align}
    \dot x_1 &= 2\alpha c_1 z_1 e^{2\alpha z_0} - \frac 12 x_0 y_0 + \frac{1}{2}e^{2\alpha z_0} (c_1 c_6 - c_2 c_5)(t-c_4) + c_7 e^{2\alpha z_0},
\end{align}

and

\begin{multline}
    \dot y_1 = 2\alpha c_2 z_1 e^{2\alpha z_0} + \frac 12 \alpha z_0 e^{2\alpha z_0} \\
    + \frac{3c_1}{4}e^{2\alpha z_0}\int x_0 \mathrm dt + \frac{c_2}{4} e^{2\alpha z_0} \int y_0 \mathrm dt - \frac{y_0^2 - x_0^2}{4}.
\end{multline}

where we absorbed a constant in $c_7$. When we perform the integrations via Mathematica~\cite{mathematica} we obtain:

\begin{multline}
    x_1 = \frac{1}{4\left(c_1^2+c_2^2\right){}^2} \Bigg[-4 \alpha  A_5 c_1 \left(c_1^2+c_2^2\right) c_3^2 \left(t-c_4\right)\\
    +c_1
   \left(c_1^2+c_2^2\right) c_3 \left(8 \alpha  A_2 c_3 \left(t-c_4\right)+8 A_1+3
   A_5-4 c_{10}\right) \tanh \left(\alpha  c_3 \left(t-c_4\right)\right)\\
   +c_1
   \left(c_1^2+c_2^2\right) c_3 \left(\alpha  c_3 \left(A_5-4 c_{10}\right)
   \left(t-c_4\right)-4 c_9\right) \text{sech}^2\left(\alpha  c_3
   \left(t-c_4\right)\right)\\
   -8 A_2 c_1 \left(c_1^2+c_2^2\right) c_3 \log \left(\cosh
   \left(\alpha  c_3 \left(t-c_4\right)\right)\right)\\
   +4 A_5 c_1
   \left(c_1^2+c_2^2\right) c_3 \tanh \left(\alpha  c_3 \left(t-c_4\right)\right) \log
   \left(\cosh \left(\alpha  c_3 \left(t-c_4\right)\right)\right)\\
   +\frac{2 c_1 c_2 c_3
   \tanh \left(\alpha  c_3 \left(t-c_4\right)\right)}{\alpha ^3}\\
   -\frac{\left(\alpha 
   \left(c_1^2+c_2^2\right) c_5-c_1 c_3\right) \left(\alpha  \left(c_1^2+c_2^2\right)
   c_6-c_2 c_3\right) \log \left(\tanh \left(\alpha  c_3
   \left(t-c_4\right)\right)+1\right)}{\alpha ^3 c_3}\\
   +\frac{\left(\alpha 
   \left(c_1^2+c_2^2\right) c_5+c_1 c_3\right) \left(\alpha  \left(c_1^2+c_2^2\right)
   c_6+c_2 c_3\right) \log \left(1-\tanh \left(\alpha  c_3
   \left(t-c_4\right)\right)\right)}{\alpha ^3 c_3}-\\
   \frac{2 \left(c_1^2+c_2^2\right)
   \left(c_1 c_6-c_2 c_5\right) \log \left(\cosh \left(\alpha  c_3
   \left(t-c_4\right)\right)\right)}{\alpha ^2}\\
   +\frac{2 \left(c_1^2+c_2^2\right) c_3
   \left(c_1 c_6-c_2 c_5\right) \left(t-c_4\right) \tanh \left(\alpha  c_3
   \left(t-c_4\right)\right)}{\alpha }\\
   +\frac{4 \left(c_1^2+c_2^2\right) c_3 c_7 \tanh
   \left(\alpha  c_3 \left(t-c_4\right)\right)}{\alpha }\Bigg] + c_{11},\label{eq:K_5_x1}
\end{multline}

and

\begin{multline}
    y_1 = \frac{1}{8 \alpha ^3 \left(c_1^2+c_2^2\right){}^2 c_3} \Bigg[-8 \alpha ^4 A_5 c_2 \left(c_1^2+c_2^2\right) c_3^3 \left(t-c_4\right)\\
    +2 \alpha^3 c_2 \left(c_1^2+c_2^2\right) c_3^2 \left(\alpha  c_3 \left(A_5-4 c_{10}\right)
   \left(t-c_4\right)-4 c_9\right) \text{sech}^2\left(\alpha  c_3
   \left(t-c_4\right)\right)\\
   -16 \alpha ^3 A_2 c_2 \left(c_1^2+c_2^2\right) c_3^2 \log
   \left(\cosh \left(\alpha  c_3 \left(t-c_4\right)\right)\right)\\
   +2 c_3^2 \tanh
   \left(\alpha  c_3 \left(t-c_4\right)\right) \Big(\alpha ^3 c_2
   \left(c_1^2+c_2^2\right) \left(8 \alpha  A_2 c_3 \left(t-c_4\right)+8 A_1+3 A_5-4
   c_{10}\right)\\
   +4 \alpha ^3 A_5 c_2 \left(c_1^2+c_2^2\right) \log \left(\cosh
   \left(\alpha  c_3 \left(t-c_4\right)\right)\right)\\
   -2 \alpha ^2
   \left(c_1^2+c_2^2\right)+3 c_1 \left(\alpha ^2 \left(c_1^2+c_2^2\right) c_5
   t+c_1\right)+c_2 \left(\alpha ^2 \left(c_1^2+c_2^2\right) c_6 t+c_2\right)\\
   +\alpha
   ^2 \left(c_1^2+c_2^2\right) \log \left(\frac{c_3^2 \text{sech}^2\left(\alpha  c_3
   \left(t-c_4\right)\right)}{c_1^2+c_2^2}\right)+3 c_1^2 \log \left(\cosh
   \left(\alpha  c_3 \left(t-c_4\right)\right)\right)\\
   +c_2^2 \log \left(\cosh
   \left(\alpha  c_3 \left(t-c_4\right)\right)\right)-c_1^2+c_2^2\Big)\\
   +4 \alpha ^3
   \left(c_1^2+c_2^2\right) c_3^3 \left(t-c_4\right)+6 \alpha  c_1^2 c_3^3
   \left(c_4-t\right)+2 \alpha  c_2^2 c_3^3 \left(c_4-t\right)\\
   +\left(c_1 c_3-\alpha 
   \left(c_1^2+c_2^2\right) c_5\right){}^2 \log \left(\tanh \left(\alpha  c_3
   \left(t-c_4\right)\right)+1\right)\\
   -\left(\alpha  \left(c_1^2+c_2^2\right) c_5+c_1
   c_3\right){}^2 \log \left(1-\tanh \left(\alpha  c_3
   \left(t-c_4\right)\right)\right)\\
   -\left(c_2 c_3-\alpha  \left(c_1^2+c_2^2\right)
   c_6\right){}^2 \log \left(\tanh \left(\alpha  c_3
   \left(t-c_4\right)\right)+1\right)\\
   +\left(\alpha  \left(c_1^2+c_2^2\right) c_6+c_2
   c_3\right){}^2 \log \left(1-\tanh \left(\alpha  c_3
   \left(t-c_4\right)\right)\right)\\
   -6 \alpha  c_1 \left(c_1^2+c_2^2\right) c_3 c_5
   \log \left(\cosh \left(\alpha  c_3 \left(t-c_4\right)\right)\right)\\
   -2 \alpha  c_2
   \left(c_1^2+c_2^2\right) c_3 c_6 \log \left(\cosh \left(\alpha  c_3
   \left(t-c_4\right)\right)\right)\Bigg] + c_{12}\label{eq:K_5_y1}.
\end{multline}

To summarize, the solution for the Killing magnetic curve, up to first order in $B_5$ is as follows:

\begin{align}
    x &= x_0 + B_5 x_1 + \mathcal{O}(B_5^2),\\
    y &= y_0 + B_5 y_1 + \mathcal{O}(B_5^2),\\
    z &= z_0 + B_5 z_1 + \mathcal{O}(B_5^2).
\end{align}

where $x_0, y_0, z_0$ are solutions for the geodesic equation and $x_1,y_1,z_1$ are given in Equations~(\ref{eq:K_5_x1},\ref{eq:K_5_y1},\ref{eq:K_5_z1}) respectively. Last but not least, wihle making use of $x_1,y_1,z_1$, remember that $\tau = \alpha c_3 (t-c_4)$.

where $c_{11}, c_{12}$ are constants of integration.

\subsection{Magnetic Trajectory by the Sixth Killing Vector Field}

Using the expression found for $F_{(6)}$ in Equation~(\ref{eq:kmc_F_6}) and Equation~(\ref{eq:kmc_def}), then turning them into the coordinate basis we obtain the following set of equations:

\begin{align}
  \ddot x - 2\alpha \dot x \dot z &= - \frac{B_6}{2 \alpha}( 2y \alpha \dot y + e^{2 \alpha z} \dot z + \alpha^2 (x^2-y^2) \dot z) , \\
  \ddot y - 2\alpha \dot y \dot z &= B_6 y ( \dot x - \alpha x \dot z )  , \\
  \ddot z + \alpha(\dot x^2 + \dot y^2)e^{-2\alpha z} &= \frac{B_6}{2 \alpha} \left( \alpha^2 e^{-2 \alpha z}( (x^2-y^2)\dot x+2xy \dot y) + \dot x  \right) 
\end{align}

We will solve these equations of motion using a perturbation analysis. We write down $x = x_0 + B_6 x_1, y = y_0 + B_6 y_1, z = z_0 + B_6 z_1$ upto first order in $B_6$ and note that the function with zero indices are solutions of the geodesic equation of motion that we have found in Section~\ref{sec:geodesic_eom}. Because of the symmetry between $K_5$ and $K_6$ we do not need to solve the equations for this case explicitly. Just map $x\leftrightarrow y$ and $B_5 \leftrightarrow -B_6$. So the perturbative solution found in the previous section is valid for this case when the symmetry transformations are done.

\section{Conclusion}\label{sec:conlusion}

In this study, we investigated the $\mathbb H^3$ manifold which is a $(-\alpha)$-Kenmotsu manifold admitting six Killing vector fields. We have solved the geodesic equation of motion analytically. We calculated the motion of a charged particle under the magnetic field $B_i K_{(i)}$ upto first order in $B_i$ analytically using perturbation theory for all Killing vector fields $K_{(i)}$. We have put a scaling factor $B_i$ in front of Killing vectors to manage strength of magnetic field and make it amenable to perturbative analysis. We used units where $q/m = -1$, however notice that the unit of each $B_i$ is not necessarily the unit of magnetic field. We also proved that 3-dimensional $(\alpha)$-Kenmotsu manifolds cannot have any magnetic vector field in the direction of their Reeb vector fields. This result is interesting because most of the studies related with magnetic curves of almost contact manifolds in literature deals with Reeb magnetic curves.

In Appendix~\ref{app:poincare} we have plotted the geodesic solution, 1st order perturbative result, and numerical solution for the second Killing magnetic field (see Section~\ref{sec:K_2}) where the $y$-coordinate is suppressed, e.g. $y=0$. In Appendix~\ref{app:KillingH3} we have shown calculation steps explicitly for determining Killing vector fields.

\section*{Acknowledgements}

We would like to thank the anonymous referee for providing constructive criticisms and clarifications.

%\begin{thebibliography}{99}
%\bibliographystyle{plain}

%\end{thebibliography}  

% === APPENDICES ==
\begin{appendices}

\section{Representation of a Solution in the Poincaré Disk}\label{app:poincare}

In this section, we make connection with the hyperbolic manifold we used ($\mathbb H^3$) and the Poincaré Disk. The metric we have is:

\begin{equation}
    \mathrm ds^2 = e^{-2\alpha z} (\mathrm dx^2 + \mathrm dy^2) + \mathrm dz^2.
\end{equation}

By multiplying the metric with $\alpha^2$ we obtain:

\begin{equation}
    \mathrm d\sigma^2 = e^{-2z} (\mathrm dx^2 + \mathrm dy^2) + \mathrm dz^2,
\end{equation}

\noindent where we map $\alpha x \to x, \alpha y\to y, \alpha z \to z$ (This is equivalent to setting $\alpha = 1$). If we suppress the $y$ coordinate, we obtain:

\begin{equation}
    \mathrm d\Sigma^2 = e^{-2z} \mathrm dx^2 + \mathrm dz^2.
\end{equation}

Let us define $y = e^z$ (this is \emph{not} the original `$y$' coordinate.) Then the metric becomes:

\begin{equation}
    \mathrm d\Sigma^2 = \frac{\mathrm dx^2 + \mathrm dy^2}{y^2},
\end{equation}

\noindent which is the metric on the Poincaré half plane \cite{PoincareUpperHalfPlane}. Considering this plane defined by $\{(x,y) \mid x \in \mathbb R, y \in \mathbb R^+\}$ as the upper half complex plane with $z = x + iy$, using the following Möbius transformation (where this particular form is known as Cayley transformation~\cite{cayley_transform}), we map upper complex plane into a unit disk:

\begin{equation}
    \omega(z) = \frac{z-i}{z+i}.
\end{equation}

The Cartesian coordinates ($\omega_x$ is the real part, $\omega_y$ is the imaginary part) that correspond to $\omega(x+iy)$ is then calculated as:

\begin{align}
    \omega_x = \frac{x^2 + y^2 - 1}{x^2 + (y+1)^2}, \quad \omega_y = -\frac{2x}{x^2 + (y+1)^2}.
\end{align}

Representation of solutions in the Poincaré Disk is useful, because it allows us to visualize the asymptotic behavior of solutions as $t\to\pm\infty$. In a way, it may be seen similar to Penrose diagrams~\cite{PenroseDiagram} in general relativity. See Figure~\ref{fig:k2plot} for a comparison of our 1st order result with the numerical solution.

\begin{figure}[H]
    \centering
    \includegraphics{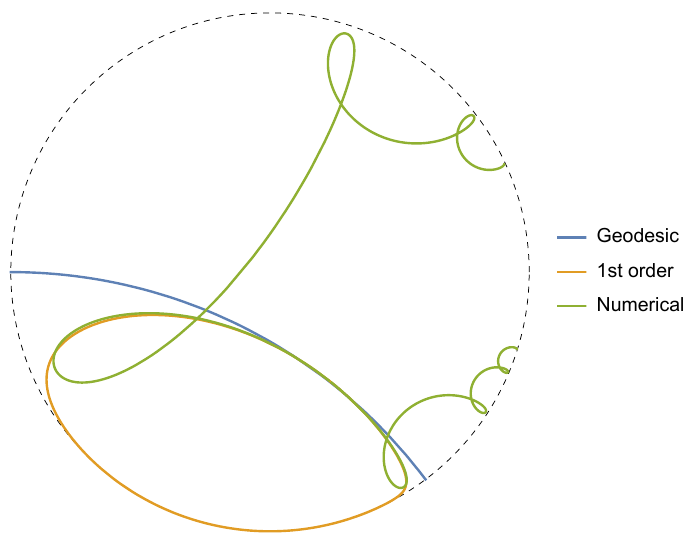}
    \caption{Here we see three plots. 1) The geodesic solution, 2) The 1st order solution, 3) The numerical solution. We see that the 1st order solution properly approximates the numerical solution when $|t|$ is less than some upper-bound, but later the numerical solution takes over and displays complex behavior. In the plot, $t\in(-30,30)$. The parameters are as follows: $c_1 = 1, c_2 = 0, c_3 = 1, c_4 = 0, c_5 = 1, c_6 = 0, c_7 = 1, B_2 = 0.1$. and for the compatible numerical solution $x(0)=1.1, x'(0) = 0.95, z(0)=0, z'(0) = -0.1$.}
    \label{fig:k2plot}
\end{figure}

\section{Killing Vector Fields of $\mathbb H^3$ }\label{app:KillingH3}
We show explicit calculation steps for Killing vector fields by using a general ansatz such as $K=K^{x}(x,y,z) \partial _{x}+K^{y}(x,y,z)\partial _{y}+K^{z}(x,y,z)\partial _{z}$. Following partial differential equations were obtained from the Killing
equation \footnote{ Note that general vector field can be written in index form as $K=K^\mu \partial_{\mu} $ and we lower the indices by using the metric so that $K_{\alpha}=g_{\mu \alpha} K^\mu$ can be used in \eqref{Killing_general}. $K^\mu$ coefficients correspond to the functions in our ansatz, $K^{x}(x,y,z)$ etc.} \eqref{Killing_general}

\begin{equation}\label{BKilling}
\begin{array}{c}
 -\alpha K^{z}( x,y,z) + \partial_x K^{x} ( x,y,z) =0 \\ 
 -\alpha K^{z}( x,y,z) + \partial_y K^{y} ( x,y,z) =0  \\ 
e^{-2\alpha z }\partial_z K^{y} (x,y,z)+ \partial_y K^{z} ( x,y,z) =0 \\ 
 \partial_y K^{x} (x,y,z)+ \partial_x K^{y}(x,y,z) =0 \\ 
e^{-2\alpha z } \partial_z K^{x} (x,y,z) + \partial_x K^{z} (x,y,z) =0 \\ 
 \partial_z K^{z}(x,y,z) =0
\end{array}
\end{equation}

 From the last equation we have $K^{z}\left( x,y,z\right) =c+K^{z}\left( x,y\right) $.
Inserting this in equations \eqref{BKilling} and writing with a more compact notation (i.e. assuming $K^i$ s depend on $x,y,z$) we get,

\begin{equation}
-\alpha \left( c+K^{z}\right) +\partial_x K^{x}=0  \label{B1}
\end{equation}

\begin{equation}
-\alpha \left( c+K^{z}\right) +\partial_y K^{y}=0  \label{B2}
\end{equation}

\begin{equation}
\partial_y K^{z}+ e^{-2\alpha z}\partial_z K^{y}=0  \label{B3}
\end{equation}

\begin{equation}
\partial_y K^{x}+\partial_x K^{y}=0  \label{B4}
\end{equation}

\begin{equation}
\partial_x K^{z}+e^{-2\alpha z}\partial_z K^{x}=0  \label{B5}
\end{equation}

 From \eqref{B1} and \eqref{B2}

\begin{equation}
\partial_x K^{x}=\partial_y K^{y}  \label{B6}
\end{equation}

 Taking $\partial_z$ of \eqref{B1} and \eqref{B2}

\begin{eqnarray}
\partial_z\partial_x K^{x}=0 \ , \quad \partial_z\partial_y K^{y}=0   \label{B8}
\end{eqnarray}

 Taking $\partial_x$ of \eqref{B3} and $\partial_y$ of \eqref{B5}

\begin{equation} \label{B9}
    \partial_x\partial_y K^{z}=-e^{-2\alpha z}\partial_z \partial_x K^{y}=-e^{-2\alpha z}\partial_z \partial_y K^{x}
\end{equation}

 From \eqref{B9} and \eqref{B4}

\begin{align} \label{B10}
    &\partial_z\partial_x K^{y}=\partial_z\partial_y K^{x} \ , \quad \partial_y K^{x}= -\partial_x K^{y} \implies \partial_z\partial_y K^{x}= -\partial_z \partial_x K^{y} \nonumber \\
    &\implies \partial_z\partial_x K^{y}=-\partial_z \partial_x K^{y} \implies \partial_z\partial_x K^{y}=0 \ , \quad \partial_z \partial_y K^{x}=0
\end{align}

 Equations \eqref{B8}  and \eqref{B10} dictates separation of $z$ variable for $K^{x}\left( x,y,z\right)$ and $K^{y}\left(x,y,z\right)$. Hence, the general solution of $K^{x}$ and $K^{y}$ should be in the following form

\begin{equation} \label{B11}
K^{x}\left( x,y,z\right) =h_{1}(z)+f_{1}(x,y) \quad , \quad K^{y}\left( x,y,z\right)=h_{2}(z)+f_{2}(x,y)
\end{equation}

 Inserting these forms back into Killing equations \eqref{BKilling} we get

\begin{equation}
-\alpha \left( c+K^{z}\right) +\partial_x f_{1}=0  \label{B12}
\end{equation}

\begin{equation}
-\alpha \left( c+K^{z}\right) +\partial_y f_{2}=0  \label{B13}
\end{equation}

\begin{equation}
e^{-2\alpha z}h_{2}'(z)+\partial_y K^{z}=0  \label{B14}
\end{equation}

\begin{equation}
\partial_y f_{1}+\partial_x f_{2}=0  \label{B15}
\end{equation}

\begin{equation}
e^{-2\alpha z}h_{1}'(z)+\partial_x K^{z}=0  \label{B16}
\end{equation}

 From \eqref{B12} and \eqref{B13} we have

\begin{equation}
\partial_x f_{1}=\partial_y f_{2}  \label{B17}
\end{equation}

 Using \eqref{B17} and \eqref{B15} we get the following,

\begin{eqnarray}\label{B18}
& \quad \partial_{x}^{2}f_{1}=\partial_x \partial_y f_{2} \quad , \quad \partial_{y}^{2}f_{1}=-\partial_x \partial_y f_{2} \nonumber \\ 
& \quad \partial_{y}^{2}f_{2}=\partial_x \partial_y f_{1}  \quad , \quad  \partial_{x}^{2}f_{2}=-\partial_x \partial_y f_{1}  \nonumber \\ 
&\implies \partial_{x}^{2} f_{1}+\partial_{y}^{2} f_{1}=0 \quad , \quad  \partial_{x}^{2} f_{2}+\partial_{y}^{2} f_{2}=0 
\end{eqnarray}

 The last two equations of \eqref{B18} are Laplace equations in 2-dimensions. We can write a general solution for those equations as \footnote{A general solution to the Laplace equation is given as $f(x,y) =( A\cosh ( \lambda x) +B\sinh ( \lambda x) )( C\cos( \lambda y) +D\sin ( \lambda y) )$. But this solution leads to inconsistencies in Killing equations. Therefore, we use a more simplified general solution which can potentially solve the Killing equations.}

\begin{equation}\label{B21}
f_{1}\left( x,y\right) =c_{1}(x^{2}-y^{2})+c_{2}xy+c_{3}x+c_{4}y+c_{5}
\end{equation}

\begin{equation}\label{B22}
f_{2}\left( x,y\right) =c_{6}(x^{2}-y^{2})+c_{7}xy+c_{8}x+c_{9}y+c_{10}
\end{equation}

 Inserting \eqref{B21} and \eqref{B22} in equations \eqref{B12}, \eqref{B13} and \eqref{B15}

\begin{equation}
2xc_{1}+yc_{2}+c_{3}-\alpha c-\alpha K^{z}=0  \label{B23}
\end{equation}

\begin{equation}
-2yc_{6}+xc_{7}+c_{9}-\alpha c-\alpha K^{z}=0  \label{B24}
\end{equation}

\begin{equation}
\left( c_{2}+2c_{6}\right) x+\left( c_{7}-2c_{1}\right) y+c_{4}+c_{8}=0
\label{B25}
\end{equation}

 From these equations we get 

\begin{equation}
c_{3}=c_{9}=\alpha c \ , \quad  c_{2}=-2c_{6} \ , \quad c_{7}=2c_{1} \ , \quad c_{4}=-c_{8}
\label{B26}
\end{equation}

 Equations \eqref{B23} and \eqref{B24} gives information about the general form of $
K^{z}\left( x,y\right) $ which should be linear with respect to $x$ and $y$. Thus,

\begin{equation}
K^{z}\left( x,y\right) =c_{12}x+c_{13}y  \label{B27}
\end{equation}

 Using \eqref{B27} for $K^{z}\left( x,y\right)$ in Killing equations we get,

\begin{equation}
2xc_{1}+yc_{2}+c_{3}-\alpha c-x\alpha c_{12}-y\alpha c_{13}=0  \label{B28}
\end{equation}

\begin{equation}
2yc_{6}-xc_{7}-c_{9}+\alpha c+x\alpha c_{12}+y\alpha c_{13}=0  \label{B29}
\end{equation}

\begin{equation}
c_{13}+e^{-2\alpha z}h_{2}'(z) =0  \label{B30}
\end{equation}

\begin{equation}
c_{4}+x\left( c_{2}+2c_{6}\right) +y\left( -2c_{1}+c_{7}\right) +c_{8}=0
\label{B31}
\end{equation}

\begin{equation}
c_{12}+e^{-2\alpha z}h_{1}'(z) =0  \label{B32}
\end{equation}

 Solving these all together we get

\begin{equation}
c_{12}=\frac{c_{7}}{\alpha }=\frac{2c_{1}}{\alpha } \ , \quad  c_{13}=\frac{c_{2}}{
\alpha }=\frac{-2c_{6}}{\alpha } \ , \quad  c_{3}=c_{9}=\alpha c \ , \quad  c_{8}=-c_{4} \label{B33}
\end{equation}

 From \eqref{B30} and \eqref{B32}

\begin{equation}
h_{1}\left( z\right) =-\frac{c_{12}}{2\alpha }e^{2\alpha z} \ , \quad   h_{2}\left( z\right) =-\frac{c_{13}}{2\alpha }e^{2\alpha z}   \label{B34}
\end{equation}

 After relabeling the constants and using the relations obtained in \eqref{B33}
and \eqref{B34} we reach the final general solution for $K^{x},K^{y},K^{z}$

\begin{equation}
\begin{array}{c}
K^{x}=\left( \frac{\alpha }{2}\left( x^{2}-y^{2}\right) -\frac{e^{2\alpha z}}{2\alpha }\right) c_{1}+\alpha xy c_{2}+\alpha xc_{3}+yc_{4}+c_{5} \\ 
K^{y}=\left( \frac{\alpha }{2}\left( y^{2}-x^{2}\right) -\frac{e^{2\alpha z}}{2\alpha }\right) c_{2}+\alpha xy c_{1}+\alpha yc_{3}-x c_{4}+c_{6} \\ 
K^{z}=c_{1}x+c_{2}y+c_{3}
\end{array}
\label{B36}
\end{equation}

 Recall that we have taken the general ansatz $K=K^{x}\partial _{x}+K^{y}\partial _{y}+K^{z}\partial _{z}$  for Killing vector fields. Using the functions given in \eqref{B36} and adjusting the constants accordingly ($c_{i}=1$  and all $\left.\ c_j\right|_{j\neq i}=0$) we obtain 6 independent Killing vectors.

\begin{align}
     &\mathbf{K_1} = \partial_x, \quad \mathbf{K_2} = \partial_y, \quad \mathbf{K_3} = y\partial_x - x\partial_y, \quad  \mathbf{K_4} = \alpha x \partial_x + \alpha y \partial_y + \partial_z, \nonumber \\
       &\mathbf{K_5} = \left( \frac{\alpha }{2}\left( x^{2}-y^{2}\right) -\frac{e^{2 \alpha z}}{2\alpha }\right) \partial _{x}+\alpha x y \partial _{y}+x\partial _{z} , \nonumber \\
    &\mathbf{K_6} = \alpha x y \partial _{x}+\left( \frac{\alpha }{2}\left(y^{2}-x^{2}\right) -\frac{e^{2 \alpha z}}{2\alpha }\right) \partial_{y}+y\partial _{z}    
\end{align}

 Note that maximally symmetric spaces have $n\left( n+1\right) /2$ independent Killing vectors, where $n$ is the dimension of the space. Therefore, having 6 independent Killing vectors $\mathbb H^3$ is a maximally symmetric space.
\end{appendices}

% === REFERENCES ===
%\section*{References}

%\bibliographystyle{plain}
%\bibliography{refs.bib}

\end{document}